\documentclass[amsmath,nobibnotes,aps,superscriptadress,amssymb,pra,aps,showpacs,superscriptaddress,twocolumn,longbibliography,preprintnumbers]{revtex4-1}
\usepackage{amsmath,amsfonts,amssymb,amsthm,graphicx,epsfig,bbm}
\usepackage[colorlinks=true,citecolor=blue,linkcolor=blue,urlcolor=blue]{hyperref}
\usepackage[usenames]{color}
\usepackage[dvipsnames]{xcolor}
\usepackage{tikz}
\usepackage{float}
\usepackage{amsmath}
\usepackage{dcolumn}
\usepackage{multirow,array,booktabs}
 
\usepackage{bm}
\usepackage{color}
\usepackage{tabularx}
\usepackage{times}
\usepackage{epstopdf}
\usepackage{amscd}
\usepackage{amsthm}
\usepackage{amssymb}
\usepackage{bbm}
\usepackage{amssymb}
\usepackage{amstext}
\usepackage{latexsym}
\usepackage{hyperref}
\usepackage{subfigure}
\usepackage{psfrag}
\usepackage{soul,xcolor}
\usepackage[normalem]{ulem}
\usepackage{dsfont}
\usepackage{txfonts}
\usepackage{physics}
\usepackage{footnote}
\usepackage{multirow}
\usepackage{appendix}
\usepackage{mathtools}
\usepackage{tikz}
\usetikzlibrary{quantikz}
\usepackage{bbold}

\usepackage{enumitem}
\usepackage{xstring}
\usepackage{cancel}
\usepackage{amscd}
\usepackage{bbm}

\usepackage{units}
\usepackage{cancel}
\usepackage{xstring}
\usepackage{import}
\usepackage{subfiles}
\usepackage{amsmath,amsfonts,amssymb,amsthm,graphics,graphicx,epsfig,bbm}
\usepackage[colorlinks=true,citecolor=blue,linkcolor=blue,urlcolor=blue]{hyperref}
\usepackage[usenames]{color}
\usepackage{amsmath}
\usepackage{epsfig}
\usepackage{dcolumn}
\usepackage{bm}
\usepackage{color}
\usepackage{times}
\usepackage{epstopdf}
\usepackage{amssymb}
\usepackage{amstext}
\usepackage{latexsym}
\usepackage{hyperref}
\usepackage{amsfonts}
\usepackage{psfrag}
\usepackage{soul,xcolor}

\usepackage[normalem]{ulem}
\usepackage{dsfont}
\usepackage{txfonts}
\usepackage{physics}
\usepackage{xcolor}
\usepackage{footnote}
\usepackage{multirow}
\usepackage{appendix}
\usepackage{mathtools}
\usepackage{ulem}
\usepackage{mathtools}

\definecolor{mycolor}{RGB}{106,81,162}

\newtheorem{remark}{Remark}

\usepackage{mathtools}

\newcommand{\iden}[1]{
    \ifthenelse{\equal{1}{\string #1}}
  {
   \mathbbm{1}
  }
  {
   \mathbbm{1}^{\otimes#1}}
  }
\newcommand{\ketzero}[1]{
    \ifthenelse{\equal{1}{\string #1}}
  {
   \ket{0}
  }
  {
   \ket{0}^{\otimes#1}}
  }
\newcommand{\brazero}[1]{
    \ifthenelse{\equal{1}{\string #1}}
  {
   \bra{0}
  }
  {
   \bra{0}^{\otimes#1}}
  }
\newcommand{\ketone}[1]{
      \ifthenelse{\equal{1}{\string #1}}
    {
     \ket{1}
    }
    {
     \ket{1}^{\otimes#1}}
    }
  \newcommand{\braone}[1]{
      \ifthenelse{\equal{1}{\string #1}}
    {
     \bra{1}
    }
    {
     \bra{1}^{\otimes#1}}
    }

\usepackage{physics}  
\usepackage{dsfont}

\begin{document}

\author{Nikita Guseynov}
\email{guseynov.nm@gmail.com}
\affiliation{Global College, Shanghai Jiao Tong University, Shanghai 200240, China.}

\author{Mikel Sanz}

\affiliation{Department of Physical Chemistry, University of the Basque Country UPV/EHU, Apartado 644, 48080 Bilbao, Spain}
\affiliation{EHU Quantum Center, University of the Basque Country UPV/EHU, Apartado 644, 48080 Bilbao, Spain}
\affiliation{IKERBASQUE, Basque Foundation for Science, Plaza Euskadi 5, 48009, Bilbao, Spain}
\affiliation{Basque Center for Applied Mathematics (BCAM), Alameda de Mazarredo, 14, 48009 Bilbao, Spain}

\author{\'Angel Rodr\'iguez-Rozas}
\affiliation{Corporate \& Investment Banking, Banco Santander, Avenida de Cantabria S/N, 28660 Boadilla del Monte, Madrid, Spain}

\author{Nana Liu}
\affiliation{Global College, Shanghai Jiao Tong University, Shanghai 200240, China.}
\affiliation{Institute of Natural Sciences, School of Mathematical Sciences, Shanghai Jiao Tong University, Shanghai 200240, China}
\affiliation{Ministry of Education Key Laboratory in Scientific and Engineering Computing, Shanghai Jiao Tong University, Shanghai 200240, China}

\author{Javier Gonzalez-Conde}
\email{javier.gonzalezc@ehu.eus}
\affiliation{Department of Physical Chemistry, University of the Basque Country UPV/EHU, Apartado 644, 48080 Bilbao, Spain}
\affiliation{EHU Quantum Center, University of the Basque Country UPV/EHU, Apartado 644, 48080 Bilbao, Spain}
\affiliation{Quantum Mads, Calle Larrauri 1, Edificio A, piso 3, puerta 28, Derio, Spain}

\title{Quantum Algorithm for Local-Volatility Option Pricing via the Kolmogorov Equation}
\date{\today}

\begin{abstract}
The solution of \textit{option-pricing problems} may turn out to be computationally demanding due to non-linear and path-dependent payoffs, the high dimensionality arising from multiple underlying assets, and sophisticated models of price dynamics. In this context, quantum computing has been proposed as a means to address these challenges efficiently. Prevailing approaches either simulate the stochastic differential equations governing the forward dynamics of underlying asset prices or directly solve the backward pricing partial differential equation. Here, we present an \textit{end-to-end} quantum  algorithmic framework that solves the Kolmogorov forward (Fokker-Planck) partial differential equation for local-volatility models by mapping it to a Hamiltonian-simulation problem via the Schr\"odingerisation technique. The algorithm specifies how to prepare the initial quantum state, perform Hamiltonian simulation, and how to efficiently recover the option price via a swap test. In particular, the efficiency of the final solution recovery is an important advantage of solving the forward versus the backward partial differential equation. Thus, our end-to-end framework offers a potential route toward quantum advantage for challenging option-pricing tasks. In particular, we obtain a polynomial advantage in grid size for the discretization of a single dimension. Nevertheless, the true power of our methodology lies in pricing high-dimensional systems, such as baskets of options, because the quantum framework admits an exponential speedup with respect to dimension, overcoming the classical curse of dimensionality.

\end{abstract}

\maketitle
\newtheorem{theorem}{Theorem}[section]

\section{Introduction}

Option-pricing models are mathematical frameworks used in financial markets to estimate the fair value of options---contracts that grant the right, but not the obligation, to buy or sell an underlying asset at a predetermined strike price \(K\) on or before a specified maturity \(T\). These models underpin valuation, hedging, and risk management for investors and financial institutions.

  Solving option-pricing models can be computationally demanding for several reasons.  A primary driver is the complexity of option payoffs and their dependence on multiple market factors—such as underlying asset prices, interest rates, volatilities, and time to maturity—which can introduce nonlinearity and path dependence \cite{björk2004arbitrage,Asian_options_clasical,Exotic_options_clasical,Bermudan_options_clasical,American_options_clasical,Barrier_options_clasical}. In addition, many financial models involve multiple state variables and risk factors, leading to high-dimensional problems that suffer from the curse of dimensionality \cite{clewlow1998implementing,Glas04,hout2021numericalvaluationamericanbasket,10.1007/978-3-642-31703-3_6,doi:10.1137/09077271X}. For example, pricing options on multiple underlying assets or incorporating multiple sources of risk can significantly increase computational complexity—even when the underlying price dynamics are analytically solvable \cite{Multioption_classical_1,Multioption_classical_2,Multioption_classical_3}. Moreover, the dynamics of many sophisticated pricing models do not admit closed-form solutions and instead require numerical methods to obtain approximate solutions \cite{dupire1994pricing,Heston_classical,Heston_classical_2,SABR_classical_model,SLV}.

\begin{figure}[b!]
    \includegraphics[width=0.35\textwidth]{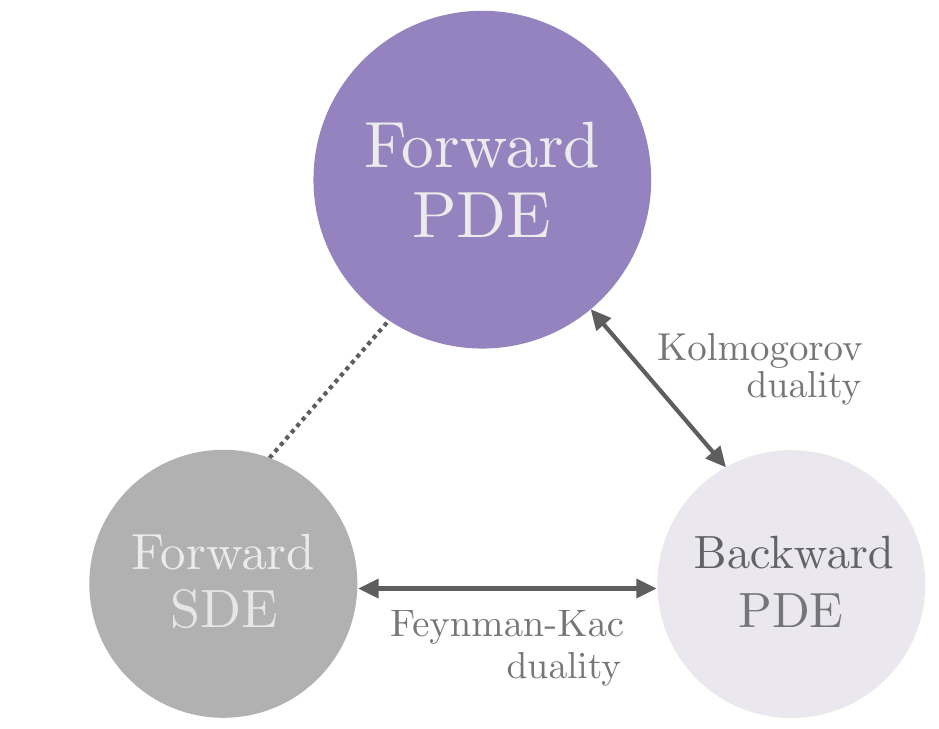}
    \caption{Equivalent differential-equation formulations for option-pricing. Forward models simulate the dynamics of the underlying asset price from the current date forward, while backward models propagate the option value from maturity back to today. In the forward approach, the option price is computed as the expected payoff at maturity, discounted at the risk-free interest rate. We can also distinguish between PDE formulations—the Kolmogorov equations \cite{Conze2008TheFK}—and SDE formulations (e.g., geometric Brownian motion), connected to the backward PDE through the Feynman-Kac formula \cite{black1973pricing}.
}
    \label{fig:figure_1}
\end{figure}

\begin{figure*}
    \includegraphics[width=1\textwidth]{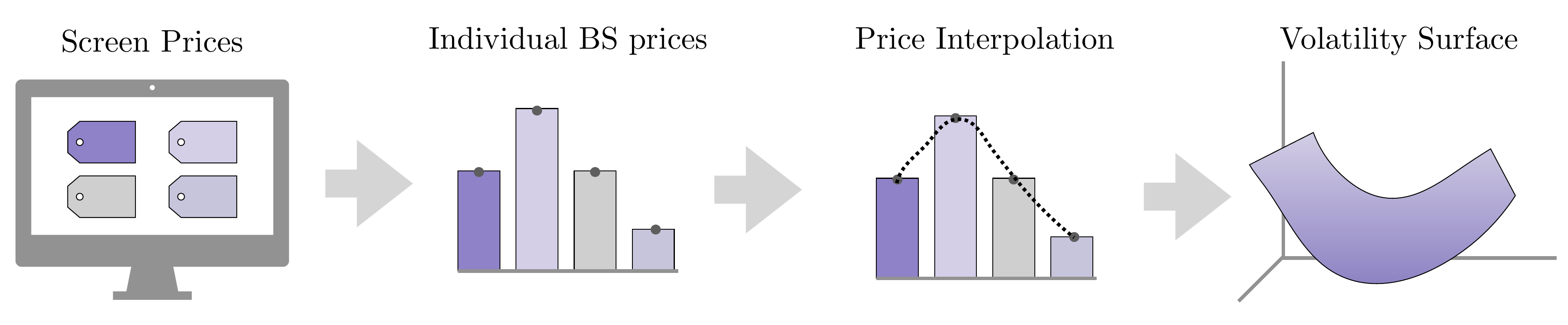}
    \caption{
   A unique diffusion process consistent with risk-neutral densities implied by European option prices is represented by the local-volatility (LV) model. This process is constructed as follows: (i) observe implied volatilities across strikes \(K\); (ii) map each implied volatility to an option price via the Black--Scholes formula, yielding discrete price points (each associated with a specific volatility); (iii) interpolate these points to obtain a continuous option-price surface over strikes and maturities; and (iv) compute Dupire’s local-volatility \(\sigma(S_{\tau},\tau)\) from this continuous surface.
}
    \label{fig:figure_local_vol}
\end{figure*}

These numerical methods can be computationally intensive—particularly for high-dimensional problems or when high precision is required—which often leads, in practice, to solving simplified versions of the models. Finding efficient classical algorithms that yield accurate solutions for complex models remains an active area of research in computational finance \cite{beliaeva2010simple,hout2010adi,alos2012decomposition,CHIARELLA20122034,he2018closed}. Consequently, these challenges can push classical computational resources to their limits, sometimes rendering them insufficient for accurate pricing within a reasonable time frame.

 In this context, quantum computing emerges as a promising avenue for developing more efficient computational techniques to address the aforementioned challenges in option-pricing \cite{Martin_gale, bermudan_option_Miyamoto, Rebentrost_stopping, koppe_asian_Valuation_Trees,  cibrario_rainbow, Latorre_unary, Stamatopoulos_QSP_payoff, Kan_ORCA_Heston, Miyamoto_smile,ORUS_Tensor_neural_network_Heston, kubo_solving_SDE, Montanaro_multilevel_MC_local_vol, Neufeld_CPWA, woerner_option_pricing, Woerner_threshold, Miyamoto_and_Kubo_finite_difference, Miyamoto_varitional,  Woerner_heston, Miyamoto_TN_series, Kan_ORCA_Heston, Nield_S&P, ORUS_Tensor_neural_network_Barenblatt,Montanaro_multilevel_MC_local_vol, Miyamoto_MC_pseudo-random-number_generator,Herbert_MCI, Fogarasi_MC, Merecek_survey_MC, gonzalez2021pricing, Ana_PCA, Woerner_greeks, Jin_LIBOR, Woerner_fk, Review_Yndurain, Review_wilkens, Review_intallura, Review_gomez_Cesga, ORUS2019100028, Review_herman2022, Pracht, kumar2024simulatingnonhermitiandynamicsfinancial}. 
Quantum algorithms have the potential to significantly accelerate computations for problems arising in option-pricing models, including cases with complex payoffs \cite{Martin_gale,bermudan_option_Miyamoto,Rebentrost_stopping,koppe_asian_Valuation_Trees,cibrario_rainbow,Latorre_unary,Stamatopoulos_QSP_payoff,Kan_ORCA_Heston,Miyamoto_smile,ORUS_Tensor_neural_network_Heston,Montanaro_multilevel_MC_local_vol,Neufeld_CPWA,woerner_option_pricing,Woerner_threshold,Miyamoto_and_Kubo_finite_difference}, multi-asset options \cite{Neufeld_CPWA,woerner_option_pricing,Woerner_threshold,Miyamoto_and_Kubo_finite_difference,Miyamoto_varitional}, and complex price dynamics \cite{ORUS_Tensor_neural_network_Barenblatt,ORUS_Tensor_neural_network_Heston,kubo_solving_SDE,Miyamoto_smile,Woerner_heston,Miyamoto_TN_series,Kan_ORCA_Heston,Nield_S&P,Montanaro_multilevel_MC_local_vol}. In particular, quantum amplitude estimation provides a quadratic speedup in query complexity for Monte Carlo integration relative to classical sampling methods \cite{Nield_S&P,Miyamoto_MC_pseudo-random-number_generator,Herbert_MCI,Fogarasi_MC,Merecek_survey_MC}. Harnessing these capabilities could enable more sophisticated pricing models that better handle the high-dimensional and computationally demanding nature of option-pricing simulations.

A closely related work \cite{Woerner_fk} studies the Feynman--Kac equation through a variational-style propagation that evolves solutions of $\partial_t u=\mathcal{L}u - Vu$, working explicitly with the Kolmogorov forward form. The method applies McLachlan’s principle to compute the time update for the ansatz parameters directly, so no outer optimization loop is required. Each time step advances the ansatz $|\psi(\vec{\theta})\rangle$ by iteratively solving for a parameter increment $\delta\vec{\theta}$. In particular, the update computes the change between $V(t)$ and $V(t+\mathrm{d}t)$ by determining the corresponding $\delta\vec{\theta}$. This design helps avoid some gradient pathologies and exhibits practical stability over short horizons. However, theoretical guarantees remain limited—general convergence rates and error bounds have not yet been established. The reported cost is $N_\tau(N_\theta^3)$ per time step, with $N_\tau$ the number of steps and $N_\theta$ the number of ansatz parameters. What remains open is an analysis that ties $N_\tau$ and $N_\theta$ to model characteristics—such as local-volatility $\sigma(x,t)$, payoff discontinuities, or stiffness—to clarify resource scaling and guide design choices.

In this work, we analyze and demonstrate the potential of quantum computers to solve the Kolmogorov forward equation under a local-volatility (LV) model—a non-constant volatility setting that enables more realistic financial modeling. Our contribution extends prior quantum approaches to option-pricing \cite{gonzalez2021pricing} by adopting a forward partial differential equation (PDE) formulation in place of the standard backward PDE (both equivalent), see Fig.~\ref{fig:figure_1}.  Relative to solving the stochastic differential equation (SDE) for the underlying price dynamics, the Kolmogorov forward PDE is often numerically advantageous in computational finance, for example in path-dependent options, where solving the pricing problem backward using a numerical scheme can provide higher accuracy than least-squares Monte Carlo \cite{glasserman2004monte}. We also investigate the forward–backward duality in quantum implementations of option-pricing algorithms.

Our manuscript presents a methodology within the quantum input--processing--output (IPO) framework for solving the Kolmogorov forward equation by mapping it to a Hamiltonian-simulation problem via the Schr\"odingerisation technique \cite{Jin_2024, analog,jin2022quantum, jin2025schrodingerizationmethodlinearnonunitary}. Thus, it can be considered an \textit{end-to-end} quantum algorithm that takes classical input data and outputs classical information. We provide an algorithm that specifies how to (i) encode the initial classical data into a quantum state, (ii) carry out the Hamiltonian simulation, and (iii) efficiently recover the option price. Without further refinement in studying effective regimes, our methodology  for a single option-pricing instance via Schr\"odingerisation provides a polynomial advantage in the number of qubits. However, the technique’s true potential emerges when pricing baskets of options, where it can yield exponential speedups and mitigate the curse of dimensionality.

\section{option-pricing models for European vanilla options}

Option-pricing refers to the process of determining the fair value of a financial derivative contract under the risk-neutral framework \cite{hull2018options}. In equity, European-type options are derivative contracts that grant the holder the right, but not the obligation, to buy (call option) or sell (put option) an underlying stock asset at a specified strike price at a predetermined date, specifying the time maturity of the deal. The price of a derivative depends on the inherent randomness stemming from the time evolution of the underlying stock and interest rates, with the latter sometimes assumed to be deterministic, as we do in this manuscript. 
In mathematical finance, the asset $S_{\tau}$ underlying a financial derivative is typically modeled by assuming that it follows a SDE under the risk-neutral measure, of the form

\begin{eqnarray}
\begin{gathered}
       dS_{\tau} = (r_{\tau}-d_{\tau}) S_{\tau}\,d\tau + \sigma_{\tau} S_{\tau}dW_{\tau}\\
        S_{\tau=0}=S_0,
    \end{gathered}
    \label{eq:sto_process}
\end{eqnarray}
where $r_{\tau}$ is the instantaneous risk-free interest rate, $d_{\tau}$ is the dividends rate, such as $\mu_{\tau}=r_{\tau}-d_{\tau}$ gives an average local direction to the dynamics, and $W_{\tau}$ is a Wiener process \cite{Conze2008TheFK, Lawler2005ConformallyIP}, representing the inflow of randomness into the dynamics and governed by the volatility 
$\sigma_{\tau}$. In the simplest scenario, i.e. the Black–Scholes model \cite{black1973pricing}, $\sigma_{\tau}$ is assumed to be constant. However, in reality, the realized volatility varies with time and with the price of underlying in an stochastic way. Thus, in order to achieve a more realistic and accurate description, more complex models have been proposed. In particular, in this article we focus on the LV model, in which the volatility of an asset is assumed to vary with both the asset's price and time, allowing for a more accurate fit to market prices of options across different strike prices and maturities \cite{dupire1994pricing}.  

The stochastic dynamics given by Eq.~(\ref{eq:sto_process})  induces a price probability distribution at time $\tau$, denoted by $\mathbb{P}$, which determines the 
 fair value of the option price at spot $\tau = 0$ (present time) as  
 
\begin{equation}
\label{eq:payoff_definition}
V_0=e^{-\int_0^T r_d(s) ds} \underbrace{E^{\mathbb{P}}_{\tau} \left[ C_0(S_{\tau=T}) | S_{\tau=0} = s_0,  \right]}_{C(S_0, 0)},
\end{equation} where $r_d(\cdot)$ is the domestic risk-free rate that determines the discount factor $\exp(-\int_0^T r_d(s) ds)$, and $C_0(s)$ is the \textit{payoff} of the contract, see some examples in Table \ref{tab:payoff}.

\begin{table}[h!]
    \centering
    \begin{tabular}{c||c}
         Put & ${\large C^\text{put}_0(s)=\text{max}(0,s-K)}$ \\ \hline
         Call & ${\large C^\text{call}_0(s)=\text{max}(0,K-s)}$
    \end{tabular}
    \caption{\textit{Payoff} function for the Put and Call option contracts. Here $K$ denotes the strike price.}
    \label{tab:payoff}
\end{table}

\subsection{Local-volatility (LV) model}
\label{secion: local_vol}
The LV model treats volatility, $\sigma(S_{\tau},\tau)$, as a deterministic function of both the current asset spot, $S_t$,  
and time, $\tau$, consistent with market prices for all options on a given underlying, yielding an asset price model of the type

\begin{eqnarray}
\begin{gathered}
       dS_{\tau} = \mu_{\tau} S_{\tau}\,d\tau + \sigma(S_{\tau},\tau) S_{\tau}\,dW_{\tau}\\
        S_{\tau=0}=S_0.
    \end{gathered}
    \label{eq:local_vol_sde}
\end{eqnarray}
The concept of a LV fully consistent with option markets was developed by Bruno Dupire \cite{dupire1994pricing}, who noted that there is a unique diffusion process consistent with the risk neutral densities, derived from the market prices of European options, see Fig.~\ref{fig:figure_local_vol}. In this sense, this model yields a volatility surface which represents the relationship between an option's implied volatility, its strike price and the time to expiration \cite{local_vol}. By fitting a parametric or non-parametric surface to market data, these models can provide more accurate pricing for options across different strikes and maturities. Thus, this model introduces a dependency of the volatility on the stock prices and the time to maturity, which represent a significant increase of difficulty to solve the model.

We remark that the importance of this model lies in its use to calculate exotic option valuations which are consistent with observed prices of vanilla options, as well as being able to compute the sensitivities ``\textit{greeks}'' of vanilla options.

\begin{figure}[t!]
    \includegraphics[width=0.46\textwidth]{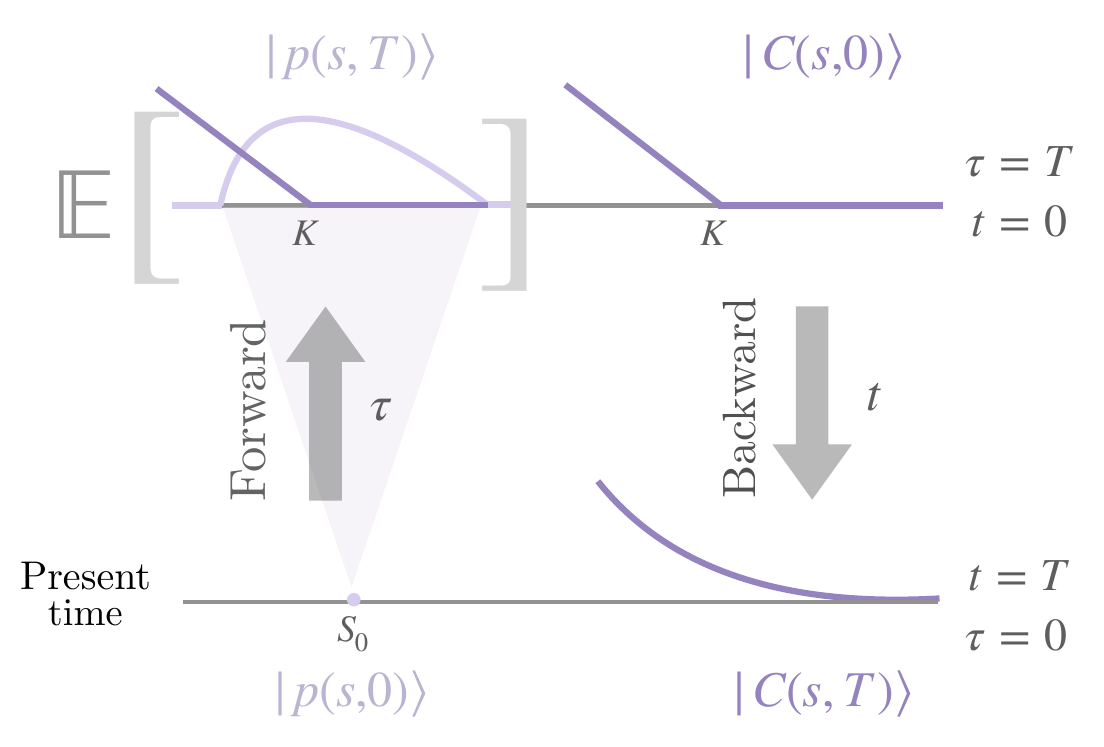}
    \caption{Backward and forward approaches to solve option-pricing, $C(S_0, 0)$. Forward (time $\tau$ from present to future): The stock price is modeled as a stochastic process or as the Kolmogorov forward PDE which evolves from the present time to maturity. In order to retrieve the option price 
    one would need to compute the expected value of the payoff under the resulting underlying price distribution at maturity time and discount it to present time; Backward time $t$ from future to present: The option price is evolved backward from maturity time, given by the final payoff, to present time, through a backward Kolmogorov PDE. }
    \label{fig:figure_3}
\end{figure}

\subsection{Forward and Backward duality in option-pricing}
\begin{figure*}
    \includegraphics[width=.7\textwidth]{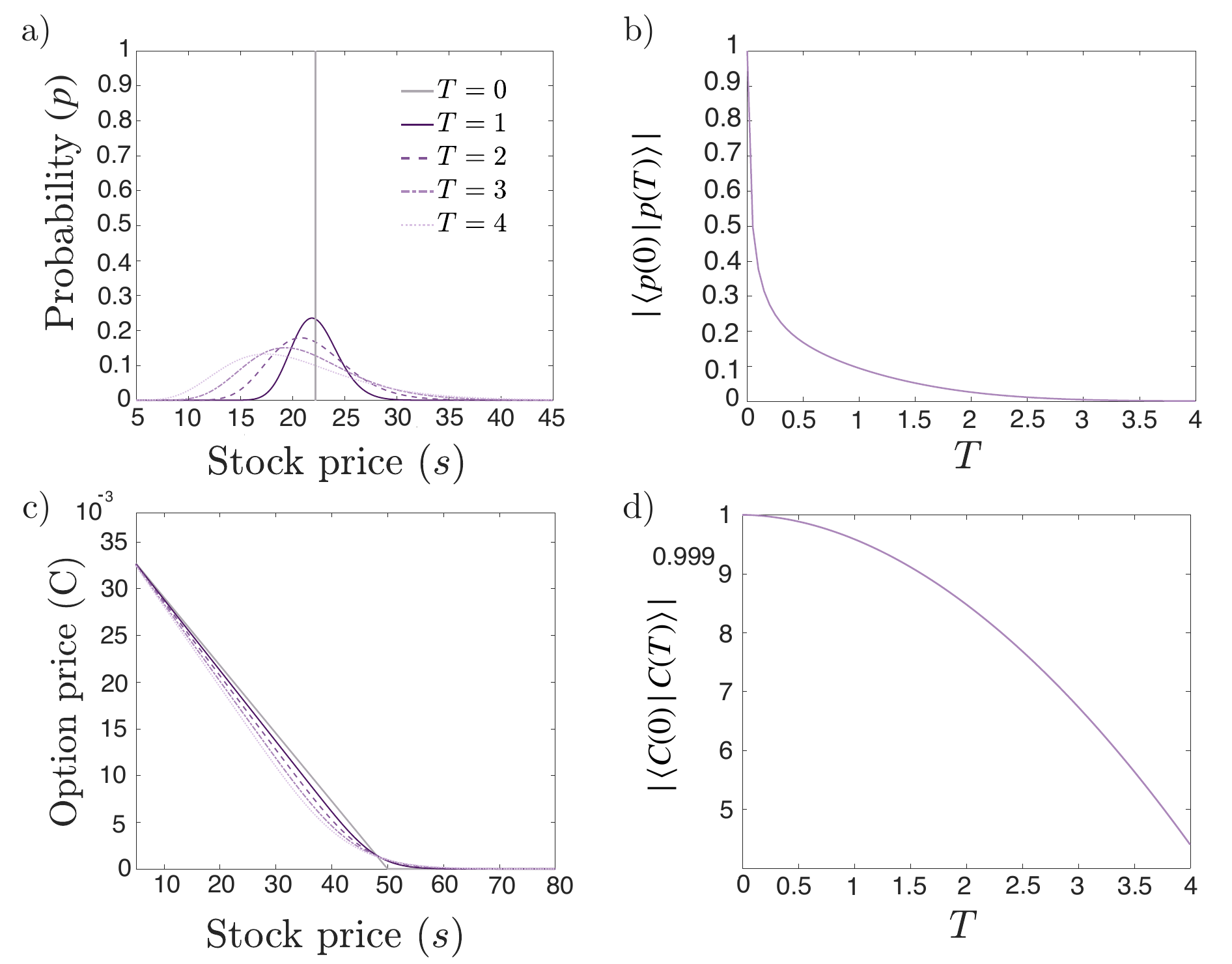}
    \caption{Solution of the Kolmogorov forward and backward  equations from the Black-Scholes model as quantum states: a) Solutions to the forward model are lognormal distributions given by Eq.  (\ref{analytical_forward_sigma_constant}); b) The overlap between the state encoding the solution to the forward equation and the state encoding the initial condition decreases exponentially in time; c) Solutions to the backward model; d) The overlap between the state encoding the solution to the backward equation and the state encoding the initial condition presents a negligible decay in time, remaining at values  $\gtrsim 0.995$ for relevant maturity times. }
    \label{fig:comparative_fw_bw}
\end{figure*}

There exists two different modeling strategies to address the LV model based on either the forward or the backward formulation, see Fig.~\ref{fig:figure_3}. In the forward case, the dynamics of the price of the underlying asset is modeled by either a forward SDE or its equivalent Kolmogorov forward PDE. Following this formulation, the price of the option contract at time $T$ can be calculated as the expected value of the terminal pay-off of the option contract under the resulting underlying asset price distribution, $\mathbb{P}$. On the other hand, in the case of the backward formulation, we straightforwardly obtain the backward Kolmogorov PDE for the option price via the renowned Feynman-Kac formula \cite{black1973pricing}. Note that in the following we are neglecting the discount factor $\exp(-\int_0^T r_d(s) ds)$ needed to achieve the current fair value at present time.

\subsubsection{Backward PDE}
Starting from the SDE model governing the underlying variable, Eq.~(\ref{eq:local_vol_sde}), we can formulate the pricing problem of the option contract as the solution of a backward PDE with an initial condition given by the pay-off function. To that end, we consider the undiscounted price of an European-style derivative contract at $t$, denoted as $C(s, t)$, maturing at time level $\tau = T$ with $\tau := T - t$, i.e. at $t=0$. Here, the variable $t$ ``goes back'' in time from the terminal time $\tau = T$ when $t=0$, while the forward-in-time stochastic processes are now indexed by $S_{\tau}$, starting from $\tau=0$). It can be shown that under the LV model $C(s, t)$ satisfies the \textit{parabolic backward Kolmogorov} PDE
\begin{equation}
\begin{aligned}
\frac{\partial C}{\partial t}
=
  \frac{1}{2} \sigma^2(s, T-t) s^2\frac{\partial^2 C}{\partial s^2}  + \mu_t  s\frac{\partial C}{\partial s}\\
\end{aligned}
\label{eq:SLVPDE_price_undisc}
\end{equation} 
in the open domain $s \in {\rm I\!R}^{+,0}$ and $0 < t \leq T$. The terminal (initial) condition, defined at time level $\tau=T$, is given by $C(s, 0) :=~ C_0(s)$
\begin{figure*}
    \includegraphics[width=1\textwidth]{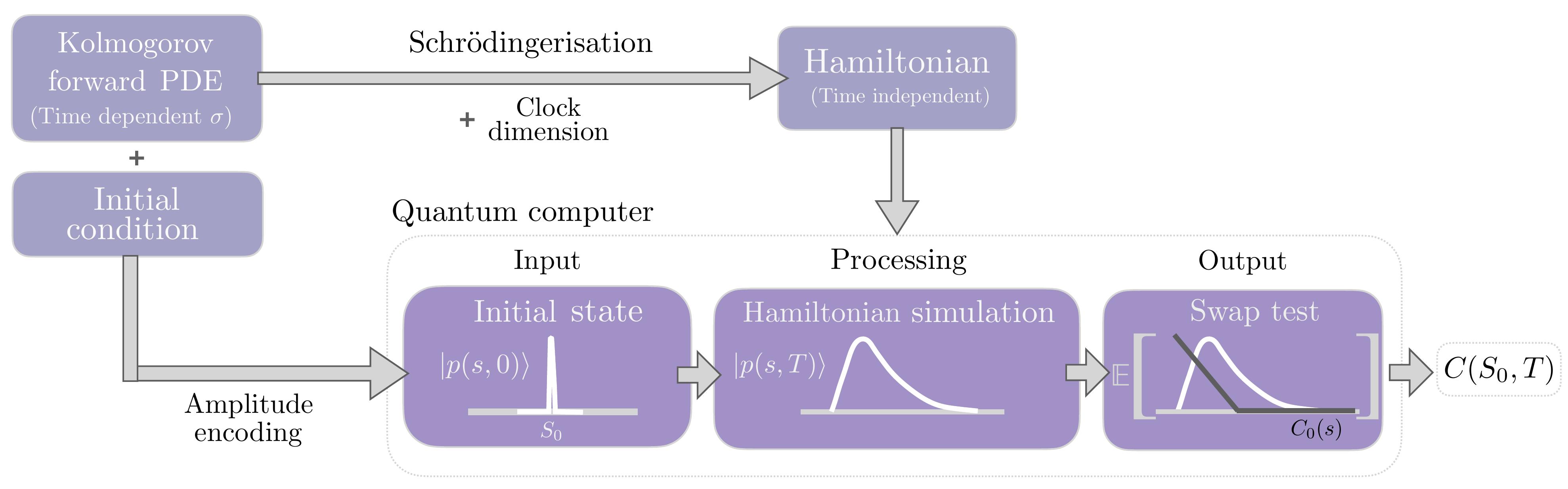}
    \caption{The IPO model is a widely used approach in quantum computing for describing the structure of an end-to-end algorithm. Firstly, classical input data is embedded into a quantum state. Subsequently, the quantum processing involves applying quantum gates and operations to manipulate qubits. In this way properties as superposition and entanglement are key quantum phenomena harnessed during processing, enabling quantum computers to explore many possible solutions to a problem in parallel and achieve a quantum advantage against classical processing, such as in Hamiltonian simulation techniques. Finally, the quantum system is measured and we obtain processed data from the output state, task that is limited by Holevo's bound \cite{holevo}. Every measurement to extract new piece of information will require the execution of the whole IPO model.}
    \label{fig:scheme}
\end{figure*}
\subsubsection{Forward PDE}

The solution  to Eq.~(\ref{eq:SLVPDE_price_undisc}) satisfies 
\begin{equation}
C(s, t) = E^{\mathbb{P}}_{\tau} \left[ C_0(S_{\tau=T}) | S_{\tau=T-t} = s,  \right], 
\label{eq:SLVMC_price_undisc}
\end{equation} 
for $0 \leq t \leq T$. Applying the \textit{tower property of conditional expectations} it follows that
\begin{equation}
\begin{aligned}
C(S_0, T) = E_0 \left[ C(S_{\tau}, t) | S_0 \right] 
 = \int_{-\infty}^{\infty} C(s,  t) p(s, \tau; S_0) ds , 
\end{aligned}
\label{eq:SLVMC_price_undisc_spot}
\end{equation}
where $p(s, \tau=T-t; S_0)$ denotes the density at time $\tau$, of the underlying distribution $S_{\tau}$, given the initial state at spot $S_0$.  It can be shown that the joint density $ p(s, \tau; S_0)$ satisfies the \textit{parabolic Kolmogorov forward} PDE \cite{Conze2008TheFK},

\begin{equation} 
\frac{\partial p}{\partial \tau}=-\frac{\partial}{\partial s}(rsp)+\frac{1}{2}\frac{\partial^2}{\partial s^2}\left( \sigma^2(s, \tau) s^2p\right).
\label{eq:forward_equation_general_form}
\end{equation}
with initial condition $p(s, 0) = \delta(s - S_0)$, and $\delta$ the Dirac delta function. For now on we will assume $r$ is time-independent. The elliptic operator in the forward PDE is simply the dual operator of the backward PDE, from which the fundamental relationship between them is established. From here on and for the sake of simplifying the notion, we denote the time variable $t$ for both the forward and backward problems where no confusion is present in each context.

The above result is general and can now be easily applied in the Black-Scholes model, where volatility is assumed to be constant. In this case, there exists an analytical expression for this density given by a lognormal distribution of the form
\begin{equation}
    p(s,\tau)=\frac{1}{s\sigma\sqrt{2\pi \tau}}\exp{-\frac{(\ln{s}-\gamma)^2}{2\sigma^2\tau}},
    \label{analytical_forward_sigma_constant}
\end{equation}
where $\gamma=(r-\frac{\sigma^2}{2})\tau+\ln{(S_0)}$.

\subsection{Forward vs Backward Models}

Unlike in the backward models, where the value of the contract is calculated for every possible value of the underlying under an specific payoff, forward models compute the resulting price distribution of the underlying from a single spot price, $S_0$. Thus, once the underlying distribution is known at maturity, it can be efficiently used to compute several option prices with different payoffs at once, requiring an integral (expected value) for each pricing problem. Note that the cost of the integral is generally lower than solving the PDE.  In contrast, the backward PDE solves a particular pricing problem at once. This alternative is therefore more efficient when computing one single pricing problem, as it does not require solving an integral as in the forward case, but solely a PDE. However, if multiple pricing tasks must be calculated at once, a backward PDE must be solved for each case (each one with its specific payoff as the terminal condition), leading to a less efficient solution than in the forward case.

On the other hand, in the context of embedding the dynamics of these PDEs into a quantum computer, a key aspect to take into account is the overlap of the normalized  solution (rescaling is needed due to non-unitary evolution) at maturity time, $|C(s,T)\rangle$ $(\ket{p(s,T)})$  with respect to the initial state, $|C(s,0)\rangle$ $(\ket{p(s,0)})$, given by  $|\langle C(s,0)|C(s,T)\rangle|$ $(|\langle p(s,0)|p(s,T)\rangle|)$. A large overlap indicates that the two states to be compared are so close together that more quantum resources are required to resolve these differences. Therefore, in these cases, the dynamics is not suitable for a quantum computer due to the large overhead of resources needed to capture important details in the dynamics (excluding those effects related to a non-unitary evolution that leads to a loss of probability easily captured with an ancillary qubit/qumode). We analyze this issue for both forward and backward PDEs, in a Black-Scholes framework, see Fig.~\ref{fig:comparative_fw_bw}. From this analysis we can appreciate that the backward PDE solutions have a large overlap with the initial state, with ``quantum fidelity'' exceeding $ 0.99$, which makes it very difficult to differentiate between these states. This means that we can capture the non-unitary evolution with a simple single-qubit rotation applied to an ancilla qubit appended to the initial state. Therefore, after normalization we can reproduce a solution with high overlap with respect to the exact solution, and minor details such as the surrounding of the spot corresponding to the strike price are tough to be captured accurately. 

In contrast, \textit{this issue does not arise when solving for the forward equation}, whose overlap of the normalized solution with the initial condition decays rapidly as the maturity time increases. Therefore, we focus on solving the \textit{Kolmogorov forward} PDE, Eq.~(\ref{eq:forward_equation_general_form}), on a quantum computer by  mapping the PDE to a Hamiltonian simulation problem with the Schr\"odingerisation technique \cite{analog, Jin_2024, jin2022quantum, jin2025schrodingerizationmethodlinearnonunitary}.  In this sense, if given the initial condition $|p(s,0)\rangle=|S_0\rangle$ (since we use amplitude encoding with $p(s,0)=\delta(s-S_0)$ and we discretise over $s$)
the Hamiltonian simulation can be performed efficiently,  the simulation of the dynamics of our equation can be efficiently resolved as well, obtaining a quantum state proportional to the solution, $|p(s,T)\rangle$ whose amplitudes are proportional to the price distribution of the underlying asset. This will enable us to apply information retrieval techniques, such as the SWAP test, to compute the expected value of a given pay-off $C_0(s)$ and extract the option price information $C(S_0, \tau=0)$ at the single spot $S_0$ (without the discounting term).

\begin{figure*}[t]
    \includegraphics[width=.95\textwidth]{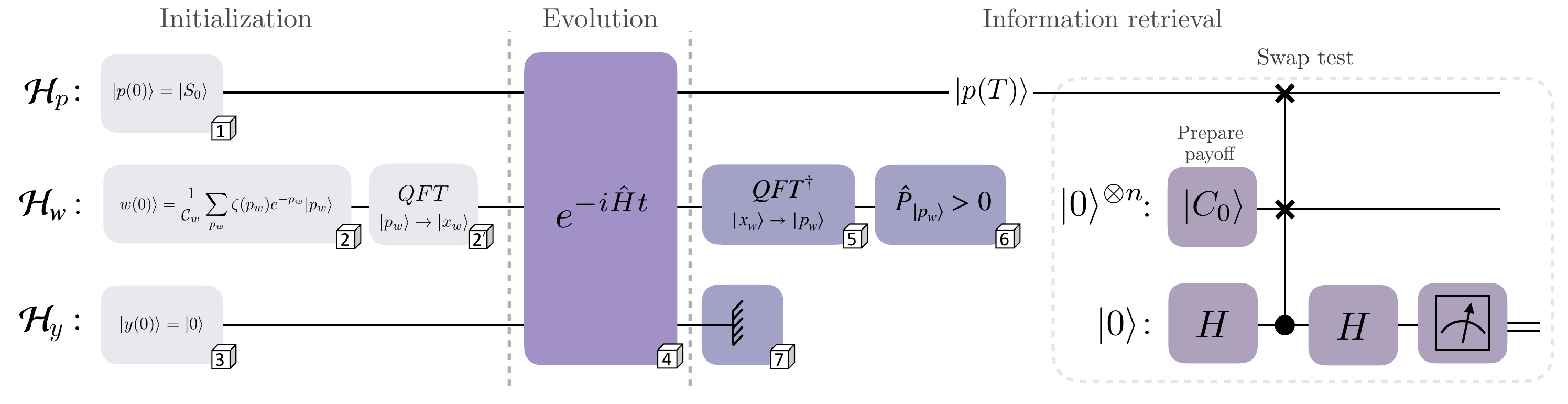}
    \caption{
    General scheme for solving the forward equation with $t$-
and $s$-dependent volatility $\sigma(s, t)$, see Eq.~(\ref{eq: sigma_time_spatial_dependent}). 
(1) The $n$-qubit register is prepared in a state approximating 
a Dirac delta, encoding the initial condition of the forward 
Kolmogorov equation (Eq.~(\ref{eq:forward_equation_general_form})), and is realized in practice 
as a computational basis state corresponding to the current asset price $\ket{S_0}$ or as sharply peaked Gaussian (Eq.~(\ref{eq:gaussian_approx_delta})). 
(2) The auxiliary (Schrödingerisation) qumode is initialized in a $C^\infty$ 
function \eqref{eq:schrodinger_initial_as_multiplication}.  
(2$^\prime$) A quantum Fourier transform is applied to this register 
to switch from momentum to coordinate representation, $\hat{\textbf{p}}_w \rightarrow \hat{\textbf{x}}_w$. 
(3) The auxiliary clock dimension $\tau$ is initialized with a Dirac delta 
function, again approximated by a computational basis state $\ket{0}$ or a sharply peaked Gaussian (Eq.~(\ref{eq:gaussian_approx_delta})). 
(4) The system undergoes unitary evolution under the Hamiltonian in 
Eq.~(\ref{eq: Hamiltonian_forward_equation_s_dependence_onlys}). 
(5) A reversed Fourier transform is applied to switch from coordinate 
to momentum basis, $\hat{\textbf{x}}_w \rightarrow \hat{\textbf{p}}_w$. 
(6) Measurement of positive momentum is performed on the main register.
(7) The ancilla state $\ket{y = t}$ is traced out. 
After steps (6) and (7), the desired state $\ket{p(t)}$ is obtained in the main register. Subsequently, we proceed to encode the payoff in a quantum state, $\ket{C_0}$, and compute a swap test between $\ket{p(t)}$ and $\ket{C_0}$.}
    \label{fig:time_dependent_SCRODINGERISATION}
\end{figure*}

\section{option-pricing on Quantum computers}\label{sec: option-pricing on quantum computers}

In this section, we explore the capability of quantum computers to implement numerical simulations for pricing European Vanilla options under non-constant deterministic volatility regimes, see Fig. \ref{fig:scheme}. Our novel contribution lies in presenting a methodology for solving the Kolmogorov forward PDE, Eq.~(\ref{eq:forward_equation_general_form}), which describes the underlying price dynamics, within the framework of the LV model, see Section \ref{secion: local_vol}. We assume that Dupire's volatility of the underlying asset price, $\sigma$, exhibits a polynomial relationship with both the underlying stock price, $s$, and the time, $t$, and therefore can be described by 

\begin{equation}
\sigma(s,\tau)=\sum\limits_{\substack{k=0,\ \dots,\ D_s\\ q=0,\ \dots,\ D_t}}\mathcal{C}_{k q}s^k \tau^q,\qquad \mathcal{C}_{k q}\in{\rm I\!R}.
    \label{eq: sigma_time_spatial_dependent}
\end{equation}
where $D_s$ and $D_t$ denote the largest polynomial degrees in space and time respectively. This premise is grounded in the observation that thanks to the Weierstrass approximation theorem, any continuous function on a closed interval can be uniformly approximated as closely as desired by a polynomial function.

\begin{remark}[Low-rank separable model for \(\sigma\)]
Eq.~(\ref{eq: sigma_time_spatial_dependent}) treats the local-volatility as a fully
bivariate function \(\sigma(s,t)\). In our setting this exact form is not
implementable: there is no multivariate QSVT/PET that applies a joint
polynomial transform in both \((s,t)\) \cite{Rossi2022multivariable}. Hence a generic bivariate shape cannot be realized as a single multi-input spectral transform with current tools.
To proceed while preserving the intent of Eq.~(\ref{eq: sigma_time_spatial_dependent}),
we adopt a small-rank separable
\[
  \sigma_{\mathrm{lr}}(s,\tau)=\sum_{m=1}^{R}\alpha_m\,r_m(s)\,q_m(\tau),
\]
where \(r_m,q_m\) are 1D basis functions (e.g., piecewise polynomials or Chebyshev
polynomials on a scaled domain). We assume \(R\) is small and independent of
discretization sizes and dimension (e.g., \(R=O(1)\) or at most
\(R=O(\mathrm{polylog}(1/\varepsilon))\)). Each factor is implemented with the
same 1D oracles and combined via a standard LCU, avoiding any need for
multivariate QSVT. Under this assumption, any cost term that queries the
\(\sigma\)-oracle gains a multiplicative factor \(R\) (up to polylogarithmic
precision overheads). For readability, we continue to write \(\sigma(s,t)\), with the understanding
that the implementation uses the small-rank separable representation above.
\end{remark}

\subsection{Set-up}

In order to embed the pricing problem into a quantum computer, we represent the option price of the underlying asset $s$ as the spectrum of the position operator. 
Due to the particular shape of the price probability distribution, see Fig. \ref{fig:comparative_fw_bw} a), choosing a sufficiently wide computational domain ensures $p(x,t)\approx0$ near both endpoints; the boundary dynamics are therefore negligible, and we impose periodic boundary conditions without introducing artifacts; this yields to a simple anti-hermitian form of the derivative operator $\partial/\partial x$ later on. By doing so, we map the value $s$ to the eigenvalue $x$ corresponding to the eigenstate $\ket{x}$ of the position operator, $\hat{\mathbf{x}}$.  In this set-up, the probability distribution $p(x,t)$ corresponding to the solution of the Kolmogorov forward PDE at time $t$ is proportionally encoded into the amplitude distribution of a quantum state,  
\begin{eqnarray}
\begin{gathered}
        \ket{p(\tau)}=\frac{1}{\mathbf{C}}\sum_{x=0}^{2^{n}-1}p(x,\tau)\ket{x}
    \end{gathered}
    \label{eq: wave_function_for_option_price}
\end{eqnarray}
with $\mathbf{C}$ the normalization factor. Note that this amplitude encoding does not preserve the normalization of the probability distribution, $\sum p(x)=1$, but the normalization of its square instead, $\sum |p^2(x)|=1$. Furthermore, implementing the amplitude encoding rather than the \textit{Qsample encoding} \cite{Maria_supervised}, which encodes the square root of the probability distribution into the quantum amplitudes, enables us to perform numerical schemes on the price distribution to simulate the PDE dynamics according to Eq.~(\ref{eq:forward_equation_general_form}). However, it differs in the techniques needed to retrieve the price information, see Section \ref{sec_MCI}.\\

Following the reasoning above,  we firstly perform the amplitude encoding of the initial state $|p(0)\rangle$, and subsequently,  address the simulation of the  dynamics of Eq.~(\ref{eq:forward_equation_general_form}). Due to the structure of this equation, by defining $\hat{\textbf{p}}=-i\partial_x$, it is possible to rewrite it as the \textit{pseudo}-Hamiltonian system

\begin{equation}
\frac{\partial \ket{p}}{\partial \tau} =-i\hat{H}_{LV}\ket{p}, \label{eq:non_unitary_dyn_LV}
\end{equation}
with

\begin{align}
\hat{H}_{LV}&= -i r\hat{\mathds{1}}
  + i\, \sigma^2(\hat{\textbf{x}}, \tau)
  + 4i\, \sigma(\hat{\textbf{x}}, \tau)\, \sigma_x(\hat{\textbf{x}}, \tau)\, \hat{\textbf{x}}
  + i\, \sigma_x^2(\hat{\textbf{x}}, \tau)\, \hat{\textbf{x}}^2\nonumber\\ &
  + i\, \sigma(\hat{\textbf{x}}, \tau)\, \sigma_{xx}(\hat{\textbf{x}}, \tau)\, \hat{\textbf{x}}^2\nonumber - \frac{i}{2} \sigma^2(\hat{\textbf{x}}, \tau)\, \hat{\textbf{x}}^2\, \hat{\textbf{p}}^2\\& + \left[
      r \hat{\textbf{x}}
      - 2 \sigma^2(\hat{\textbf{x}}, \tau) \hat{\textbf{x}}
      - 2 \sigma(\hat{\textbf{x}}, \tau)\, \sigma_x(\hat{\textbf{x}}, \tau)\, \hat{\textbf{x}}^2
    \right] \hat{\textbf{p}}.\label{eq:LV_Hamiltonian}
\end{align}

\noindent Note that we define $\sigma_x(\hat{\textbf{x}}, \tau)$ as the derivative with respect to $x$ of the explicit form of 
$\sigma(x, \tau)$ in Eq.~(\ref{eq: sigma_time_spatial_dependent}). After performing 
this derivative, we promote the classical variable $x$ to the 
operator $\hat{\mathbf{x}}$, so that the resulting quantity is 
denoted as $\hat{\sigma}_x(\hat{\mathbf{x}}, \tau)$, which should be understood as an operator.

One can easily check that the corresponding operator of Eq.~(\ref{eq:LV_Hamiltonian}) is not Hermitian and therefore induces  non-unitary dynamics. However, quantum computers operate under the linear and unitary principles of quantum mechanics. To address this discrepancy, we employ the \textit{Schr\"odingerisation} technique \cite{analog, Jin_2024, jin2022quantum, jin2025schrodingerizationmethodlinearnonunitary}, which has been proved to have an optimal dependence on matrix queries to simulate non-unitary dynamics \cite{jin2025schrodingerizationmethodlinearnonunitary}. This methodology
facilitates our approach to the problem via embedding the dynamics of the PDE into a Hamiltonian simulation problem. In our current approach, we firstly use Schr\"odingerisation to get an Schrodinger PDE, and then discretize operators $\hat{\mathbf{x}}$ and $\hat{\mathbf{p}}$ to get an ODE system.

\subsection{Embedding into a quantum system}\label{subsec:SCRODINGERISATION}

The Schr\"odingerisation method constructs a Hamiltonian, defining it in such a way that its evolution followed by a simple measurement procedure mimics the dynamics of a specific PDE \cite{Jin_2024, PRA2023,analog,PRS2024,cao2023quantum}. In the following, we demonstrate how the non-unitary dynamics of the LV forward model, given by Eq.~(\ref{eq:non_unitary_dyn_LV}) is embedded using this technique. The embedding is performed into a larger Hilbert space through a warped phase transformation, which introduces an additional $w$-qumode represented with $n_w$ ancillary qubits, in such a way that the total dynamics is unitary. 


Moreover, due to the time-dependence of the market volatility, as shown in Eq.~(\ref{eq: sigma_time_spatial_dependent}), the Schr\"odingerisation method leads to a time-dependent Hamiltonian. In this sense, it is widely recognized that such time-dependent Hamiltonians present challenges in implementing the evolution operator \cite{PhysRevLett.114.090502, low2019hamiltoniansimulationinteractionpicture,cao2023quantum}. To address this issue, we employ the technique outlined in Refs.~\cite{cao2023quantum, cao2024unifying}, which involves enlarging the Hilbert space, introducing a \textit{clock dimension}. This expansion transforms the time-dependence of a given Hamiltonian into a position operator $\hat{x}_y$ of an additional qumode represented with $n_y$ ancilla registers. Consequently, this approach enables us to work with a time-independent Hamiltonian, albeit at the expense of slightly enlarging the Hilbert space. Therefore, from now on, we will assume the Hilbert space to be 
\begin{equation}
    \mathcal{H}= \mathcal{H}_p \otimes \mathcal{H}_w \otimes \mathcal{H}_y
    \label{eq:Hilber_space_consist_of_three_dims}
\end{equation}
with $\mathcal{H}_p$ the subsystem that encodes the option price distribution, $\mathcal{H}_w$ the subsystem needed for the Schr\"odingerisation auxiliary qumode (warped phase transformation), and $\mathcal{H}_y$ the subsystem for the auxiliary qumode to remove time-dependence (clock dimension). The overview of both techniques combined is shown in Fig. \ref{fig:time_dependent_SCRODINGERISATION}.

\subsubsection{Schr\"odingerisation}\label{subsec:Schrodingerisation}

We first address the embedding of  the non-unitary dynamics induced by Eq.~(\ref{eq:LV_Hamiltonian}). By applying the Schr\"odingerisation method, we extend our Hilbert space and embed the \textit{pseudo}-Hamiltonian of Eq.~(\ref{eq:non_unitary_dyn_LV}) into a Schrödinger-type equation with unitary evolution governed by the Hamiltonian 
\begin{equation}
\begin{gathered}
     \hat{H}_{LV_S}(\tau) =  \left(
     \frac{\{\hat{C}(\hat{\textbf{x}},\tau),\hat{\textbf{p}}^2\}}{2}
     + \frac{\{\hat{B}(\hat{\textbf{x}},\tau),\hat{\textbf{p}}\}}{2}
     \right)\otimes \hat{\mathds{1}}_w  \\
     +  \left(
     i\frac{[\hat{C}(\hat{\textbf{x}},\tau),\hat{\textbf{p}}^2]}{2}
     + \hat{A}(\hat{\textbf{x}},\tau)
     + i\frac{[\hat{B}(\hat{\textbf{x}},\tau),\hat{\textbf{p}}]}{2}
     \right)\otimes \hat{\textbf{x}}_w,
     \label{eq: Hamiltonian_forward_equation_s_dependence_onlys}
\end{gathered}
\end{equation}
\\
with 
\begin{align*}
 A(\hat{\textbf{x}} , \tau) =\ &
r\hat{\mathds{1}}
- \sigma^2(\hat{\textbf{x}}, \tau)
- 4\, \sigma(\hat{\textbf{x}}, \tau)\, \sigma_x(\hat{\textbf{x}}, \tau)\, \hat{\textbf{x}} \\
&- \sigma_x^2(\hat{\textbf{x}}, \tau)\, \hat{\textbf{x}}^2
- \sigma(\hat{\textbf{x}}, \tau)\, \sigma_{xx}(\hat{\textbf{x}}, \tau)\, \hat{\textbf{x}}^2\\ B(\hat{\textbf{x}}, \tau) =& 
    r \hat{\textbf{x}}
    - 2 \sigma^2(\hat{\textbf{x}}, \tau)\, \hat{\textbf{x}}
    - 2 \sigma(\hat{\textbf{x}}, \tau)\, \sigma_x(\hat{\textbf{x}}, \tau)\, \hat{\textbf{x}}^2\\
    C(\hat{\textbf{x}}, \tau) =& 
    - \frac{i}{2} \sigma^2(\hat{\textbf{x}}, \tau)\, \hat{\textbf{x}}^2.
\end{align*}

\noindent The quantum state describing the system is now $\ket{v(t)}$, and it must be initialized as
\begin{equation}
    \ket{v(0)} =  \ket{p(0)}\otimes \ket{w(0)}  \in \mathcal{H}_p \otimes \mathcal{H}_w.
    \label{eq: Initial condition for Scrodingerisation}
\end{equation} Firstly, we have the initial state $|p(0)\rangle=|S_0\rangle$
represents perfect localization at the initial value $S_0$. However, in numerical and quantum algorithms, this singular state is typically approximated by a sharply peaked, normalized Gaussian function centered at the value $S_0$ for practical reasons. We clarify and elaborate on this construction in the following sections.

\begin{table*}[t]
\centering
\footnotesize
\renewcommand{\arraystretch}{1.65}

\newcommand{\RowHeight}{5.6em}

\begin{tabular}{|>{\centering\arraybackslash}m{0.10\textwidth}| 
                >{\centering\arraybackslash}m{0.27\textwidth}|   
                >{\centering\arraybackslash}m{0.27\textwidth}|   
                >{\centering\arraybackslash}m{0.27\textwidth}|}  
\hline
\parbox[c][\RowHeight][c]{\linewidth}{\centering \textbf{Register}} &
\parbox[c][\RowHeight][c]{\linewidth}{\centering \shortstack{\textbf{Main}\\\textbf{register}}} &
\parbox[c][\RowHeight][c]{\linewidth}{\centering \shortstack{\textbf{Clock}\\\textbf{register}}} &
\parbox[c][\RowHeight][c]{\linewidth}{\centering \shortstack{\textbf{Schr\"odingerisation}\\\textbf{register}}} \\
\hline\hline

\parbox[c][\RowHeight][c]{\linewidth}{\centering \shortstack{\textbf{Initial state}\\\textbf{preparation}}} &
\parbox[c][\RowHeight][c]{\linewidth}{\centering
Computational basis state $|S_0\rangle$\\[4pt]
$\displaystyle
\mathcal{O}\left(n\right)$} &
\parbox[c][\RowHeight][c]{\linewidth}{\centering
Computational basis state $|0\rangle$\\[4pt]
$\displaystyle
\mathcal{O}\left(1\right)$} &
\parbox[c][\RowHeight][c]{\linewidth}{\centering
Smooth cut-off\\[4pt]
$\displaystyle \mathcal{O}\left(n_w \log n_w\right)$} \\
\hline

\parbox[c][\RowHeight][c]{\linewidth}{\centering \textbf{Evolution}} &
\multicolumn{3}{m{0.81\textwidth}|}{\parbox[c][\RowHeight][c]{0.81\textwidth}{\centering
\mbox{$\displaystyle
\mathcal{O}\left(\left(\lVert H \rVert_{\max} T
+ \frac{\log\!\left(1/\epsilon_\text{evol}\right)}{\log\!\left(
e + (\lVert H \rVert_{\max} T)^{-1}\log\!\left(1/\epsilon_\text{evol}\right)
\right)}\right)
\cdot \left(D_s n \log n + n_w \log n_w + D_t n_y \log n_y\right)\right);\;
\lVert H \rVert_{\max} \sim \sigma_{\max}^{2}\, \max\!\left(a,b\right)^{2} 2^{2n}$}
}} \\
\hline

\parbox[c][\RowHeight][c]{\linewidth}{\centering \textbf{Information retrieval}} &
\parbox[c][\RowHeight][c]{\linewidth}{\centering
Swap test\\[4pt]
$\displaystyle \mathcal{O}\left(1/\epsilon_{V_0}^{2}\right)$} &
\parbox[c][\RowHeight][c]{\linewidth}{\centering
Trace out\\[4pt]
$\displaystyle \mathcal{O}\left(1\right)$} &
\parbox[c][\RowHeight][c]{\linewidth}{\centering
$IQFT + P_{>0}$\\[4pt]
$\displaystyle \mathcal{O}\left(\, \frac{\lVert p\!\left(0\right)\rVert^{2}}{\lVert p\!\left(T\right)\rVert^{2}}
\,\log\!\left(1/\epsilon_\text{Schr}\right)\right)$} \\
\hline
\end{tabular}
\caption{Gate complexities for initial state preparation, time evolution, and information retrieval across all registers in Fig.~\ref{fig:time_dependent_SCRODINGERISATION}. Detailed derivations and implementation aspects appear in Section~\ref{subsec:digital_implementation}. We assign $n$ qubits to the main register (encoding the option price), $n_w$ qubits to the Schr\"odingerisation dimension, and $n_y$ qubits to the clock dimension. In the evolution part the sparsity of the Hamiltonian is assumed to be $\mathfrak{s}=5$ and omitted, see detail in Section~\ref{subsec:Hamiltonian simulation -- query complexity}. Initial-state preparation and evolution costs add, while the information-retrieval cost is multiplicative (i.e., the number of repetitions of preparation and evolution). The $\epsilon$ notation is clarified in Table~\ref{table:errors}.}
\label{tab:init_state_prep_evolution_ir_final}
\end{table*}

On the other hand, the auxiliary register $\ket{w(0)}$ encodes the additional variable required to perform the  warped phase transformation introduced by the Schr\"odingerisation method. Preparing this initial state can be an important step, as it can affects both the accuracy and the computational complexity of the quantum simulation. When using the simplest instance of the ancilla state $\psi(p_w)=\exp(-|p_w|)$ one can achieve first-order accuracy. However, one can easily achieve either optimality or near-optimality in precision by modifying this function. For this we need to prepare the function $\psi(p_w)$, constructed as
\begin{equation}
\psi(p_w) = \zeta(p_w) e^{-p_w}.
\label{eq:schrodinger_initial_as_multiplication}
\end{equation}
For optimal scaling with respect to precision $\epsilon_{Schr}$, the cost scales in precision like $\ln(1/\epsilon_{Schr})$, we can choose $\zeta(p_w)=(\text{erf}(ap)+1)/2$ for a constant $a$ \cite{jin2025schrodingerizationmethodlinearnonunitary}. For near-optimality with cost scaling with precision like $\ln(1/\epsilon_{Schr})^{1/\beta}$ (e.g. $\beta=1/2$ using the mollifier below), this can be achieved by using 
\begin{equation}
\zeta(p_w) = (\eta * \chi_{[a_\xi, b_\xi]})(p_w) = \int_{\mathbb{R}} \eta(p_w - z)\, \chi_{[a_\xi, b_\xi]}(z)\, dz
\end{equation} 
which is a smooth cut-off function, obtained by convolving the indicator 
function (window function)
\begin{equation}
\chi_{[a_\xi, b_\xi]}(p_w) = 
\begin{cases}
1, & a_\xi < p_w < b_\xi, \\
0, & \text{otherwise},
\end{cases}
\end{equation}
with the standard mollifier
\begin{equation}
\eta(x) = 
\begin{cases}
\frac{1}{C} \exp\left( \frac{1}{x^2 - 1} \right), & |x| < 1, \\
0, & |x| \geq 1,
\end{cases}
\end{equation}
where $C$ is a normalization constant. The endpoints $a_\xi$ and $b_\xi$ are chosen so that the support of $\zeta(p_w)$
comfortably contains the region of interest for $p_w$, with a buffer for the mollifier’s width.
Typically, $a_\xi$ and $b_\xi$ are selected slightly beyond the computational boundaries to 
ensure rapid decay and avoid boundary artifacts, see Ref.~\cite{jin2025schrodingerizationmethodlinearnonunitary} for details.

As a result, the initial state $\ket{w(0)}$ encodes a discretization of an infinitely differentiable function with compact support, $\psi(p_w)$, which ensures that its Fourier coefficients decay rapidly,  leading  Schr\"odingerisation to achieve either near-optimal complexity or optimal query complexity with respect to precision $\epsilon_\text{Schr}$ when implementing the Hamiltonian simulation Refs.~\cite{jin2025schrodingerizationmethodlinearnonunitary}. Finally, in order to efficiently prepare $\ket{w(0)}$, we leverage the results presented in Thm. \ref{theorem:piecewise_poly} to approximate the desired function by piecewise polynomials.

After unitary evolution under the Schr\"odingerised time-dependent Hamiltonian $\hat{H}_{LV_S}$ for time $T$, the state describing our systems reads 
\begin{equation}
    \ket{v(T)}=\sum_{p_w>0} e^{-p_w}|p(T)\rangle \otimes|p_w\rangle+ \ket{\xi}.
\end{equation}
The retrieval step to obtain the quantum state price distribution, $|p(T)\rangle = \frac{1}{N_p}\sum_{i=0}^{2^n-1} p(x_i)\,|i\rangle$ with $N_p$ the normalization factor, is performed by post-selecting the positive momentum eigenvectors of the auxiliary qumode, as illustrated in Fig.~\ref{fig:time_dependent_SCRODINGERISATION}.

The probability of successfully retrieving the desired dynamics is given by
\begin{equation}
    P_{\text{succ}}\approx L_+(\psi)\frac{\|p(T)\|^2}{\|p(0)\|^2},\qquad L_+=\frac{\int_0^{+\infty}\abs{\psi(p_w)}^2dp_w}{\int_{-\infty}^{+\infty}\abs{\psi(p_w)}^2dp_w}.
\end{equation}
This form arises from the structure of the postselection process,
the measurement projects the quantum state onto the subspace associated
with positive values of $\hat{\bf p}_w$. The numerator
$\|p(T)\|^2$ corresponds to the squared norm of the evolved solution
within the projected subspace reflecting the squared norm ratio as discussed in Ref.~\cite{analog}.

Finally, the total quantum query complexity required to achieve precision $\epsilon_\text{Schr}$ for simulating Eq.~(\ref{eq:non_unitary_dyn_LV}) via  Schr\"odingerisation and including post-selection is
\begin{equation}
\mathcal{O}\left(\log(1/\epsilon_\text{Schr}) \frac{\|p(0)\|^2}{\|p(T)\|^2}\right).
\end{equation}
Note that the the quadratic factor $\|p(0)\|^2/\|p(T)\|^2$ can be further improved to $\|p(0)\|/\|p(T)\|$ with the application of quantum amplitude amplification \cite{Jin_2024}.

\subsubsection{Clock dimension}

We now present a methodology to map the time-dependent Hamiltonian resulting from applying the Schr\"odingerisation technique, given by Eq.~(\ref{eq: Hamiltonian_forward_equation_s_dependence_onlys}), to a time-independent form by introducing an additional auxiliary qumode~\cite{cao2023quantum, cao2024unifying}. In this methodology, the time variable is embedded as the position operator $\hat{\mathbf{x}}_y$ in an enlarged Hilbert space. The resulting Hamiltonian and initial state can be written as
\begin{align} 
    \hat{H} =  \mathds{1} \otimes \mathds{1}_w \otimes \hat{\mathbf{p}}_y + \hat{H}_{LV_S}(\tau = \hat{\mathbf{x}}_y), \label{eq: time-dependent-Hamiltonian}\\
    \ket{g(0)} =  \ket{v(0)} \otimes \ket{y(0)} \in \mathcal{H}, \label{eq:time-dependent hamiltonian +initials}
\end{align}
with $\ket{v(0)}$ given by Eq. (\ref{eq: Initial condition for Scrodingerisation}) and $\ket{y(0)}$ is a representation of localized wave function at zero (delta function). \\

On a discrete grid, the initial conditions for the clock and price registers are the discrete analogue of a Dirac delta function. In practice, one can use the eigenstates in the computational basis $\ket{S_0},\ket{0}$ which is a natural choice. Another way is to approximate the delta function by using a sharply peaked Gaussian distribution, see Section \ref{subsec: initial state prep}.

\subsection{Digitization and circuit implementation}
\label{subsec:digital_implementation}

In this section we present the methodology for the simulation of the the Hamiltonian given by Eq.~(\ref{eq: time-dependent-Hamiltonian}) on a gate-based quantum computer. As discussed above, the probability distribution of the underlying stock price is encoded into an $n$-qubit register and denoted as $\ket{p(\tau)}$. Additionally, we use two $n_w,n_y$ ancillary registers in order to discretize the auxiliary qumodes arising from the Schr\"odingerisation and clock dimension methodology. However, the temporal dimension remains \textit{analog}, as the time $\tau$ is interpreted by the continuum quantum circuit's parameters. Thus, our methodology comprises three segments: (i) Initial state preparation, (ii) implementation of Hamiltonian simulation, and (iii) information retrieval. The computational cost of every of these steps is depicted in Table~\ref{tab:init_state_prep_evolution_ir_final}. For the convenience of the reader we also provide Table~\ref{table:errors}.

\begin{table}[b!]\label{table:errors}
\footnotesize
\setlength{\tabcolsep}{3pt}
\renewcommand{\arraystretch}{1.05}
\centering
\begin{tabular}{@{}l|| l | l@{}}

\textbf{Symbol} & \textbf{Meaning} & \textbf{First seen} \\
\hline
$\epsilon_{\mathrm{prep}}$ &
\shortstack{Target accuracy for the state preparation} &
Thm.~\ref{theorem:piecewise_poly} \\

$\epsilon_{\mathrm{evol}}$ &
\shortstack{Hamiltonian simulation tolerance} &
Thm.~\ref{theorem: Random walk} \\

$\epsilon_{\mathrm{Schr}}$ &
\shortstack{The error caused by Schrodingerisation technique} &
Sec.~\ref{subsec:Schrodingerisation}\\

$\epsilon_{\mathrm{V_0}}$ &
\shortstack{Information retrieval noise in the option price} &
Sec.~\ref{subsec:computing the payoff} \\

\end{tabular}
\caption{Notation for main $\epsilon$-errors and their first appearance. The are are two kind of errors: $\epsilon_{\mathrm{prep}}$ and $\epsilon_{\mathrm{evol}}$ are caused by truncating series approximating target functions (like Taylor series). The errors $\epsilon_{\mathrm{Schr}}$ and $\epsilon_{\mathrm{V_0}}$ are absolute errors.}
\end{table}

\subsubsection{Initial state preparation}\label{subsec: initial state prep}

The total Hilbert space of our quantum system is given by
Eq.~(\ref{eq:Hilber_space_consist_of_three_dims}) and consists of three registers
of dimensions $2^n$, $2^{n_w}$, and $2^{n_y}$, corresponding to the stock price and the
two auxiliary variables respectively. To efficiently prepare initial quantum states for these
registers, we employ a theorem that enables an efficient construction of
quantum states encoding piecewise polynomial functions on a qubit register.

\begin{theorem}[Efficient preparation of piece-wise polynomial functions~\cite{guseynov2024efficient,gonzalez2024efficient}]
\label{theorem:piecewise_poly}
Let $f(x)$ be a piece-wise continuous function
$f: \mathbb{R} \rightarrow \mathbb{C}$ that can be decomposed
into $G$ pieces, each described by a degree-$Q_g$ polynomial:
\[
f(x) =
\begin{cases}
f_1(x) = \sum_{i=0}^{Q_1} \alpha^{(1)}_i x^i, & \text{if } K_1 \geq x \geq a \\
f_2(x) = \sum_{i=0}^{Q_2} \alpha^{(2)}_i x^i, & \text{if } K_2 \geq x > K_1 \\
\vdots \\
f_G(x) = \sum_{i=0}^{Q_G} \alpha^{(G)}_i x^i, & \text{if } b \geq x > K_{G-1}
\end{cases}
\]
where $\alpha_i^{(g)} \in \mathbb{C}$.

Then, there exists a $n$-qubit quantum circuit $U_f$ that
efficiently prepares a $2^n$-dimensional discretized quantum state
proportional to $f(x)$, using:
\begin{enumerate}
    \item $\mathcal{O}\left(\sum_{g=1}^G Q_g n \log n\right)$ C-NOT and single-qubit gates,
    \item $n-1$ ancilla qubits initialized in the $\ket{0}$ state and returned to $\ket{0}$ by the end of the circuit.
\end{enumerate}
with success probability proportional to the filling ratio
\[
\mathcal{F} := \|f\|_2^2/(2\|f\|^2_{\max}).
\]
\end{theorem}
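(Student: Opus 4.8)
The plan is to construct the circuit $U_f$ explicitly via the coherent \emph{evaluate--rotate--uncompute--postselect} paradigm, reducing the piecewise problem to the per-piece polynomial problem. First I would handle the partition of the domain: since the intervals $[a,K_1],(K_1,K_2],\dots,(K_{G-1},b]$ are disjoint and cover the grid, two comparator circuits testing $K_{g-1}<x\le K_g$ flag which piece each basis state $\ket{x}$ belongs to. Conditioned on that flag I apply the arithmetic appropriate to $f_g$, so that the amplitude on $\ket{x}$ equals $f(x)$ exactly with no cross-talk between pieces; the total cost is then the sum over $g$ of the per-piece costs, matching the $\sum_g Q_g$ structure in the statement.

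For a single piece, the core primitive evaluates $f_g(x)=\sum_{i=0}^{Q_g}\alpha_i^{(g)}x^i$ into an $n$-bit scratch register by Horner's rule, $f_g(x)=\alpha_0^{(g)}+x(\alpha_1^{(g)}+x(\cdots))$, which uses $Q_g$ reversible fused multiply--add steps acting on $\ket{x}$. Each $n$-bit multiply--add costs $\mathcal{O}(n\log n)$ elementary gates with a fast (FFT-based) reversible multiplier, giving $\mathcal{O}(Q_g\,n\log n)$ per piece and $\mathcal{O}(\sum_g Q_g\,n\log n)$ in total. I then imprint the amplitude with a single controlled $R_y$ rotation, $\ket{x}\ket{f_g(x)}\ket{0}\mapsto\ket{x}\ket{f_g(x)}\big(\tfrac{f_g(x)}{\|f\|_{\max}}\ket{1}+\sqrt{1-(f_g(x)/\|f\|_{\max})^2}\,\ket{0}\big)$, and finally uncompute the scratch register so the $n-1$ ancilla qubits return to $\ket{0}$, as claimed.

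Starting from the uniform superposition $2^{-n/2}\sum_x\ket{x}$ and post-selecting the flag on $\ket{1}$ leaves a state proportional to $\sum_x f(x)\ket{x}$, with success probability $2^{-n}\sum_x|f(x)/\|f\|_{\max}|^2$, which is exactly the filling ratio $\mathcal{F}=\|f\|_2^2/(2\|f\|^2_{\max})$ up to the constant fixed by the rotation convention and the real/imaginary handling of the complex coefficients $\alpha_i^{(g)}$. The $\|f\|_{\max}$ normalization enters here naturally: the angle $\arcsin(f_g(x)/\|f\|_{\max})$ is well defined precisely because $\|f\|_{\max}$ upper-bounds $|f(x)|$ on the grid.

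The hard part will be keeping the two budgets simultaneously tight. The count is linear in $Q_g$ (not quadratic) only because Horner's rule reuses the accumulated powers of $x$ across monomials, and the $\log n$ factor relies on an FFT-based reversible multiplier (up to $\log\log$ corrections); establishing correct uncomputation and the fixed-point bookkeeping is the delicate step. A second subtlety is that the computed $f_g(x)$ is a finite-precision quantity, so I must show the induced amplitude error stays below the target $\epsilon_{\mathrm{prep}}$ while the scratch width---and hence the $n-1$ ancilla count---remains as stated; this is why $\epsilon_{\mathrm{prep}}$ governs only the (suppressed) register precision and does not appear explicitly in the gate count. Finally, the comparator-based piece selection must be arranged so it neither inflates the ancilla budget nor breaks the additive $\sum_g Q_g$ form.
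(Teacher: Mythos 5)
Your proposal takes a genuinely different route from the one the paper relies on, and as written it has concrete gaps. First, note that the paper does not prove Theorem~\ref{theorem:piecewise_poly} at all: it imports it from Refs.~\cite{guseynov2024efficient,gonzalez2024efficient}, whose construction (illustrated in the paper's Appendix~\ref{appendix: auxiliary wave function} for the payoff state) builds polynomial amplitudes by \emph{explicit analytic circuits} — Walsh--Hadamard-transform-based rotation networks or MPS representations for each monomial/linear block, combined across pieces with comparator flag qubits and linear-combination-of-unitaries plus postselection. The $n\log n$ factor there comes from the structure of those explicit circuits, and the preparation of a polynomial on the grid is exact, which is why no $\epsilon$-dependence appears in the gate count. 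Your evaluate--rotate--uncompute--postselect scheme is instead the standard ``digital'' black-box state-preparation paradigm built on reversible arithmetic; it is a legitimate alternative in principle, but it does not deliver the resource counts claimed in the theorem.

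The specific gaps are these. (i) The amplitude-imprinting step is not a ``single controlled $R_y$ rotation'': to map $\ket{f_g(x)}\ket{0}$ to $\ket{f_g(x)}\bigl(\tfrac{f_g(x)}{\|f\|_{\max}}\ket{1}+\sqrt{1-(f_g(x)/\|f\|_{\max})^2}\,\ket{0}\bigr)$ one needs rotations controlled on the bits of a register holding $\arcsin\!\bigl(f_g(x)/\|f\|_{\max}\bigr)$, so the arcsine must itself be computed coherently — additional polynomial arithmetic whose cost and error you never account for. (ii) The ancilla budget fails: reversible Horner evaluation needs at least two $m$-bit value registers (multiply--add $\ket{v}\ket{w}\mapsto\ket{v}\ket{w+vx}$ is out-of-place) plus multiplier workspace and the comparator flags, so the peak ancilla count is $\Omega(n)$ with a constant larger than one, not $n-1$; Bennett uncomputation restores zeros but does not shrink the peak width. (iii) The $\mathcal{O}(n\log n)$ cost per multiplication presumes an FFT-based reversible multiplier at that gate count and that space, which is not an established primitive; standard quantum multipliers cost $\mathcal{O}(n^2)$ (schoolbook) or $\mathcal{O}(n^{1.585})$ (Karatsuba, with extra ancillas), so your total $\mathcal{O}\bigl(\sum_g Q_g\, n\log n\bigr)$ is reverse-engineered to match the statement rather than derived. (iv) Since the arithmetic is finite-precision and the arcsine is approximated, your preparation is inherently inexact, so the $\epsilon_{\mathrm{prep}}$-dependence cannot simply be ``suppressed''; the theorem's $\epsilon$-free count is tied to the exactness of the analytic construction. (v) The theorem allows $\alpha^{(g)}_i\in\mathbb{C}$; imprinting $\arg f(x)$ requires a further coherent phase computation that you mention only in passing. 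The success-probability factor ($2^{-n}\sum_x|f(x)|^2/\|f\|_{\max}^2$ versus $\|f\|_2^2/(2\|f\|_{\max}^2)$) is, by contrast, only a norm-convention issue and not a substantive error.
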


\vspace{0.5em}
\noindent

We now apply Theorem~\ref{theorem:piecewise_poly} to the three
initial states needed for our quantum algorithm, given by Eq. (\ref{eq:time-dependent hamiltonian +initials}). The initial states for the price distribution $\ket{p(0)}$ and the clock dimension $\ket{y(0)}$, are both modeled as delta functions, which can be naturally implemented by the computational basis states $\ket{S_0},\ket{0}$. In particular, the initial state for the main register $\ket{S_0}$ can be implemented by no more than $n$ $X$ gates; and the state $\ket{0}$ for the clock dimension requires $0$ gates.

An alternative way is to use a Gaussian distribution function
\begin{equation}
    \delta_\omega(x_y) \simeq \left( \frac{1}{2 \pi \omega^2} \right)^{1/4} \exp\left(-\frac{x_y^2}{4 \omega^2}\right).
    \label{eq:gaussian_approx_delta}
\end{equation}
which can be realized as a single-piece polynomial $$q_Q(x)=\frac{1}{\omega\sqrt{2\pi}}\sum_{s=0}^Q\frac{(-1)^s}{s!}\left(\frac{x^2}{2\omega^2}\right)^s$$ of degree
$Q = \mathcal{O}\left(\frac{ \log\left(1/\epsilon_\text{prep}\right) + \log \left(1/\omega\right) }{ \log\left(1+2\omega\log\left(1/\epsilon_\text{prep}\right)\right)}\right)
$,
leading to a circuit complexity
$$\mathcal{O}\left( n \log n \cdot 
\frac{ \log\left(1/\epsilon_\text{prep}\right) + \log \left(1/\omega\right) }
     { \log\left(1 + 2\omega \log\left(1/\epsilon_\text{prep}\right)\right) } 
\right),$$
where $n$ ($n_y$) is the number of qubits in the corresponding register.

On the other hand, for the Schr\"odingerisation dimension, the initial state is the smooth $\psi(p_w)$ which can be efficiently approximated by a low degree polynomial. Therefore, all initial states ($\ket{p(0)}$, $\ket{w(0)}$, and $\ket{y(0)}$) can be efficiently prepared with the quantum gate complexities
specified in Theorem~\ref{theorem:piecewise_poly}.This will employ a number of resources determined by the degree of the polynomial used for the approximation in each subspace.

\subsubsection{Hamiltonian simulation -- query complexity}\label{subsec:Hamiltonian simulation -- query complexity}

Now, we focus on how to implement the unitary $e^{-it\hat{H}}$, where $\hat{H}$ is given by Eq.~(\ref{eq: time-dependent-Hamiltonian}). This step requires introducing discrete versions of the quantum coordinate and momentum operators, $\hat{\mathbf{x}}$ and $\hat{\mathbf{p}}$.\\

For the position operator defined over the interval $(a, b)$ with a uniform grid, the corresponding 
matrix representation in its own basis takes the diagonal form
 \begin{equation}
    \hat{X}=\left(\begin{array}{ccccc}
a & 0  & \dotsm & 0 & 0 \\
0 & a+\Delta_x  &\dotsm& 0 & 0\\
\rotatebox[origin=c]{270}{\dots}&&\rotatebox[origin=c]{-45}{\dots}&&\rotatebox[origin=c]{270}{\dots}\\ 
0 & 0 &\dotsm & b-\Delta_x &0\\
0 & 0 &\dotsm & 0 &b\\
\end{array}
\right).
\label{eq:x_operatorCBQC}
\end{equation}
Here, $\Delta_x=\frac{b-a}{2^n-1}$ is the grid size.

\setlength{\arrayrulewidth}{0.3pt}
\renewcommand{\arraystretch}{1.9}

\begin{table}[H]
\centering
\footnotesize
\begin{tabular}{c||c||c}

\textbf{Der. order} & \textbf{Acc. order} & \textbf{Formula} \\ 
\hline
\hline
First & Second & 
$\frac{\partial u}{\partial x}\big|_{x_i} \approx \frac{u_{i+1} - u_{i-1}}{2\Delta x} + \mathcal{O}(\Delta x^2)$ \\
\hline
Second & Second & 
$\frac{\partial^2 u}{\partial x^2}\big|_{x_i} \approx \frac{u_{i-1} - 2u_i + u_{i+1}}{\Delta x^2} + \mathcal{O}(\Delta x^2)$ \\
\hline
First & Fourth & 
$\frac{\partial u}{\partial x}\big|_{x_i} \approx \frac{-u_{i+2} + 8u_{i+1} - 8u_{i-1} + u_{i-2}}{12\Delta x} + \mathcal{O}(\Delta x^4)$ \\
\hline
Second & Fourth & 
$\frac{\partial^2 u}{\partial x^2}\big|_{x_i} \approx \frac{-u_{i+2} + 16u_{i+1} - 30u_i + 16u_{i-1} - u_{i-2}}{12\Delta x^2} + \mathcal{O}(\Delta x^4)$ \\
\end{tabular}
\caption{
Central finite-difference schemes for the first and second derivatives of $u(x)$ at $x_i$, including second- and fourth-order accuracy.
}
\label{table:central-difference-schemes}
\end{table}

Regarding the discrete representation of the momentum operator,  the second order central difference scheme with periodic boundary conditions is the most suitable choice for the purposes of our quantum simulation (see Table~\ref{table:central-difference-schemes}) \cite{guseynov2024Hamsim}. It yields a symmetric, Hermitian matrix with only two nonzero entries per row, which is crucial for the efficiency of the Hamiltonian simulation. This reads

\begin{equation}
\hat{P}= -\frac{i}{2\Delta x}
\begin{pmatrix}
0 & 1 & 0 & \cdots & -1 \\
-1 & 0 & 1 & \cdots & 0 \\
0 & -1 & 0 & \cdots & 0 \\
\vdots & \vdots & \vdots & \ddots & 1 \\
1 & 0 & 0 & \cdots & 0 \\
\end{pmatrix}.
\label{eq:momentum1}
\end{equation}

\noindent The discrete representation of the second-derivative operator reads
\begin{equation}
\hat{P}^{2}=-\frac{1}{\Delta x^2}
\begin{pmatrix}
-2 & 1 & 0 & \cdots & 1 \\
1 & -2 & 1 & \cdots & 0 \\
0 & 1 & -2 & \cdots & 0 \\
\vdots & \vdots & \vdots & \ddots & 1 \\
1 & 0 & 0 & 1 & -2 \\
\end{pmatrix}.
\label{eq:momentum2}
\end{equation}

As we have already discussed, due to the particular shape of the price probability distribution $p(x,t)$, see Fig. \ref{fig:comparative_fw_bw} a), if we consider a sufficiently large domain, both boundaries will be zero and therefore we can take periodic boundary conditions when defining the derivative operator,
which guarantee that the momentum operator, $\hat{P}$, remains Hermitian. Nevertheless, as shown in Ref.~\cite{guseynov2025quantum}, other boundary conditions such as Robin, Dirichlet, or Neumann are
also implementable and do not change the main scaling in terms of
$n$ and $\epsilon$.

The respective discrete version of the operators $\hat{\mathbf{x}}_w,\hat{\mathbf{p}}_w, \hat{\mathbf{x}}_y,\hat{\mathbf{p}}_y$ is defined accordingly with the corresponding number of qubits in every subspace.
Now, we have all the ingredients for implementing the Hamiltonian simulation.
The following theorem provides the foundation for efficient simulation
of sparse Hamiltonians on a qubit-based quantum computer.

\begin{theorem}[Optimal sparse Hamiltonian simulation using queries (Theorem 3 from \cite{low2017optimal})]
    \label{theorem: Random walk}
    A $\mathfrak{s}$-sparse Hamiltonian $H$ acting on $n+n_w+n_y$ qubits with matrix elements
    specified up to $m$-bit precision can be simulated for time $t$ within error $\epsilon_\text{evol}$, and success probability at least $1-2\epsilon_\text{evol}$ with
    \[
    \mathcal{O}\left(\gamma+\frac{\log(\gamma/\epsilon_\text{evol})}{\log\log(\gamma/\epsilon_\text{evol})}\right)
    \]
    queries and a factor
    \[
    \mathcal{O}\left(n+n_w+n_y+m\,polylog(m)\right)
    \]
    additional quantum gates, where $\gamma := \mathfrak{s} \lVert H \rVert_{\max} t.$
    The oracles have the form
    \[
    O_H\ket{j,k,z} = \ket{j,k,z\oplus H_{jk}};\quad O_F\ket{j,l} = \ket{j, f(j,l)},
    \]
    where $j$ is a row number, $k$ a column number, $H_{jk}$ the corresponding matrix element,
    and $f(j,l)$ a function giving the column index of the $l$-th nonzero element in row $j$.
\end{theorem}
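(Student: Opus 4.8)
The statement is Theorem~3 of Ref.~\cite{low2017optimal}, so the plan is to reconstruct its quantum-signal-processing argument specialized to our register count $n+n_w+n_y$. First I would use the two oracles $O_H$ and $O_F$ to build a block encoding of the rescaled Hamiltonian $H/\alpha$ with $\alpha=\mathfrak{s}\lVert H\rVert_{\max}$. Concretely, $O_F$ enumerates, for each row $j$, the at most $\mathfrak{s}$ column indices of its nonzero entries, while $O_H$ loads the $m$-bit value $H_{jk}$; combining these with a uniform superposition over the $\mathfrak{s}$ columns and a controlled rotation that imprints $\sqrt{\lvert H_{jk}\rvert/\lVert H\rVert_{\max}}$ yields a unitary $U$ whose top-left block equals $H/\alpha$.

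Second, I would qubitize this block encoding: reflecting about the ancilla flag subspace and composing with $U$ produces a walk operator $W$ whose eigenvalues are $e^{\pm i\arccos(\lambda_k)}$, where $\lambda_k$ are the eigenvalues of $H/\alpha$. This reduces the task to engineering a target function of the eigenphases of a single unitary, rather than of the full many-body Hamiltonian.

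Third, I would apply quantum signal processing: interleaving $d$ applications of $W$ with tunable single-qubit phase rotations realizes a degree-$d$ polynomial transformation of $H/\alpha$ inside the block. Choosing the polynomial to approximate $e^{-iHt}$ via the Jacobi--Anger expansion $e^{-i\tau\cos\theta}=\sum_k (-i)^k J_k(\tau)e^{ik\theta}$ and truncating once the Bessel coefficients $J_k(\alpha t)$ enter their super-exponential tail gives operator-norm error $\epsilon_\text{evol}$ at degree $d=\mathcal{O}\!\left(\alpha t+\log(1/\epsilon_\text{evol})/\log\log(1/\epsilon_\text{evol})\right)$. Since each use of $W$ calls $O_H,O_F$ a constant number of times and $\alpha t=\gamma$, the query count is $\mathcal{O}\!\left(\gamma+\log(\gamma/\epsilon_\text{evol})/\log\log(\gamma/\epsilon_\text{evol})\right)$; the ancilla flag returns to its initial value up to the same $\mathcal{O}(\epsilon_\text{evol})$ error, which furnishes the stated success probability $1-2\epsilon_\text{evol}$. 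The ancillary gate factor $\mathcal{O}\!\left(n+n_w+n_y+m\,\mathrm{polylog}(m)\right)$ then comes from indexing the three registers and from the $m$-bit comparisons and controlled rotations performed inside each $W$.

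The hard part is twofold: keeping the block-encoding normalization tight at $\alpha=\mathfrak{s}\lVert H\rVert_{\max}$, since a looser subnormalization inflates the leading $\gamma$ term; and verifying that the QSP phase angles implementing the truncated Jacobi--Anger polynomial both exist and are classically computable to the required precision. The \emph{optimality} claim moreover requires matching this upper bound against a no-fast-forwarding lower bound, which I would invoke directly from Ref.~\cite{low2017optimal} rather than reprove here.
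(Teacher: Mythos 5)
You should first note a structural fact: the paper does not prove this statement at all. It is imported verbatim from Ref.~\cite{low2017optimal} (their Theorem~3), with only cosmetic adaptations — the qubit count is relabeled $n+n_w+n_y$ to match the three registers of the pricing algorithm, and the shorthand $\gamma:=\mathfrak{s}\lVert H\rVert_{\max}t$ is introduced — and is then used as a black box in Section~\ref{subsec:Hamiltonian simulation -- query complexity}. So there is no internal proof to compare against; the only meaningful check is whether your sketch faithfully reconstructs the argument of the cited source, and it does. Your pipeline — sparse-oracle block encoding with subnormalization $\alpha=\mathfrak{s}\lVert H\rVert_{\max}$, a walk operator whose eigenphases encode the spectrum of $H/\alpha$, quantum signal processing implementing the truncated Jacobi--Anger expansion, and the no-fast-forwarding lower bound invoked (not reproved) for optimality — is exactly the structure of the Low--Chuang proof, and you correctly flag the two genuinely delicate points (tightness of the subnormalization, existence/computability of the QSP phases). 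Two small remarks: the PRL proof you are reconstructing builds on Childs' quantum walk, whose eigenvalues are $e^{\pm i\arcsin(\lambda_k)}$; the $\arccos$ convention you quote is that of the later qubitization formulation, though the distinction is immaterial to the complexity count. Also, your truncation degree is written with $\log(1/\epsilon_\text{evol})$ where the theorem's additive term carries $\log(\gamma/\epsilon_\text{evol})$; these coincide up to constants in the regime of interest, but if you want the stated bound verbatim you should track the $\gamma$-dependence through the Bessel-tail estimate.
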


The next step is to analyze the parameter $\gamma = {\mathfrak{s}} \lVert H \rVert_{\max} t$
from Theorem~\ref{theorem: Random walk}. The sparsity, $\mathfrak{s}$, is determined by the order of the finite-difference stencils and the degree of the momentum operator contributions, as described in Eqs.~(\ref{eq:momentum1}) and (\ref{eq:momentum2}). In particular, with our choice, the Hamiltonian for the option-pricing problem is such that each row in the matrix representation of $\hat{H}$ contains at most three nonzero elements. Therefore, using the second order accurate central finite-difference scheme from Table~\ref{table:central-difference-schemes} we get $\mathfrak{s} = 5$ ($3$ for the main register and additional $2$ from the clock dimension).  

The largest matrix elements arise from the term
$\hat{\sigma}^2(t)\hat{X}^2\hat{P}^2$
(see Eq.~\ref{eq: Hamiltonian_forward_equation_s_dependence_onlys}).
Here, $\sigma(x,t)$ is the local-volatility function defined in Eq.~(\ref{eq: sigma_time_spatial_dependent}),
and we denote
\[
\sigma_{\max} := \sup_{x,\,t}\, \sigma(x,t)
\]
as its maximum over both spatial and temporal parameters. The maximal value of the coordinate operator is determined by the grid endpoints,
so $\max(\hat{\mathbf{x}}^2) \sim \max(\abs{a},\abs{b})^2$.
In a naive approximation without considering any kind of ultraviolet cutoff for the momentum operator, the largest entry in the discrete representation is
$\lVert \hat{\mathbf{p}}^2 \rVert_{\max} \sim 1/\Delta_x^2 \sim 2^{2n}$,
where $n$ is the number of qubits per register.

Combining these and considering the number of ubits per qumode is the same, the overall maximum norm is
\[
\lVert H \rVert_{\max} \sim \sigma_{\max}^2\, \max(\abs{a},\abs{b})^2\, 2^{2n}.
\]
Thus, the query complexity parameter in the theorem is
\[
\gamma \sim \mathfrak{s}\, \sigma_{\max}^2\, \max(\abs{a},\abs{b})^2\, 2^{2n}\, T.
\]

\begin{remark}\label{remark:p scaling}
The scaling $\lVert \hat{P}^2 \rVert_{\max} \sim 2^{2n}$, arising from the discretized
momentum operator, is the most restrictive factor limiting quantum advantage.
As a result, the achievable speedup in the number of spatial grid points $N=2^n$
is at most polynomial, because the norm of the Hamiltonian—dominated by
the momentum term—appears multiplicatively in the total query and gate complexities. In some set-ups this problem could be circumvented by proposing energy cutoffs that lead to accurate polynomial scaling effective descriptions \cite{Rolando_low} or using a second quantization for describing the position and momentum operators \cite{bravyi2025quantumsimulationnoisyclassical}.
\end{remark}

\subsubsection{Hamiltonian simulation -- gate complexity}

Many optimal quantum simulation algorithms, such as quantum walks with signal processing~\cite{low2017optimal} or those based on fractional queries~\cite{berry2014exponential}, assume access to an oracle that outputs Hamiltonian matrix elements in binary form. However, the explicit construction of these oracles using quantum circuits can be resource-intensive, especially for Hamiltonians with complex coefficients ~\cite{munoz2018t,haner2018optimizing}.

In this regard, Theorem~\ref{theorem: Random walk} provides complexity bounds in terms of quantum queries, which are not directly comparable to circuits described in terms of classical logic gates such as NOT, AND, or OR. In this sense, the construction of the required oracles, $O_H$ and $O_F$, presumably efficient, is left unspecified, which makes a practical implementation and a fair comparison with classical algorithms cumbersome.

To overcome this limitation, we suggest employing a strategy based on block-encoding, as in Refs.~\cite{guseynov2024Hamsim,guseynov2025quantum}.
This method enables efficient Hamiltonian simulation for PDEs with piece-wise continuous coefficients, without the need for
complicated digital oracles.

\begin{theorem}[Quantum simulation of the option-pricing Hamiltonian (Theorems 8 \& 9 from \cite{guseynov2025quantum})]
\label{theorem:QuantumHamiltonianSimulation}
Let $H$ be a $\mathfrak{s}$-sparse Hamiltonian as in Eq.~(\ref{eq: time-dependent-Hamiltonian}),
acting on $n + n_y + n_w$ qubits, where $n$ encodes the discretized
asset price, $n_y$ encodes the time (clock) dimension, and $n_w$ is the auxiliary
register for Schr\"odingerisation. Assume the volatility $\sigma(x, t)$ is defined in Eq.~(\ref{eq: sigma_time_spatial_dependent}).
Then, for any precision $\epsilon_\text{evol} > 0$ and evolution time $T > 0$, we can implement an $\epsilon_\text{evol}$-precise block-encoding of the evolution operator $e^{-iHT}$, with
block-encoding parameters
\[
(2,\, c,\, \epsilon_\text{evol}), \quad
c = \mathcal{O}\left( \log n + \log n_y + \log n_w \right).
\]

\noindent The total gate complexity (C-NOTs and one-qubit rotations) scales as
\[
\mathcal{O}\left(\underbrace{\left[
\mathfrak{s}\, \lVert H \rVert_{\max}\, T
+ \frac{\log(1/\epsilon_\text{evol})}{\log\left( e + \frac{\log(1/\epsilon_\text{evol})}{\mathfrak{s}\,\lVert H \rVert_{\max}\, T} \right)}
\right]}_{\text{Optimal Ham. sim.}}\right.
\]
\[
\times \underbrace{\left[
D_s n \log n + \mathfrak{s} n + n_w \log n_w + D_t n_y \log n_y
\right]}_{\text{Block encoding of } H}\Biggl)
\]
where $\mathfrak{s} = 3$ for the central-difference discretization and $\lVert H \rVert_{\max} \sim \sigma_{\max}^2 \max(a, b)^2 2^{2n}$,
with $\sigma_{\max} := \max_{x, t} \sigma(x, t)$. This is valid for $\mathfrak{s}\, \lVert H \rVert_{\max}\, T=\mathcal{O}\Bigl(\frac{\log(1/\epsilon_\text{evol})}{\log\left( e + (\mathfrak{s}\,\lVert H \rVert_{\max}\, T)^{-1}\log(1/\epsilon_\text{evol}) \right)}\Bigl)$.
\end{theorem}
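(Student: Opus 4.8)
The plan is to prove the statement in two stages that mirror the product structure of the complexity bound: first I would construct an explicit block-encoding of the time-independent Hamiltonian $H$ of Eq.~(\ref{eq: time-dependent-Hamiltonian}) whose per-use gate cost equals the second bracket, and then feed that block-encoding into an optimal qubitization/quantum-signal-processing simulation routine whose query count equals the first bracket. The total cost then factorizes as (number of block-encoding queries) $\times$ (gate cost per query), which is precisely the displayed product. The output being a $(2,c,\epsilon_\text{evol})$ block-encoding is then automatic: since $e^{-iHT}$ is itself unitary, the qubitization output is a standard $(2,c,\epsilon_\text{evol})$ block-encoding of it, with the ancilla count $c$ inherited from the block-encoding of $H$.

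First I would build the block-encoding of $H$ by exploiting its sparse, structured form. The Hamiltonian of Eq.~(\ref{eq: Hamiltonian_forward_equation_s_dependence_onlys}) is a sum of a bounded number of summands, each a product of: (i) a \emph{diagonal} operator that is a polynomial of degree $O(D_s)$ in $\hat{\mathbf{x}}$ and degree $O(D_t)$ in $\tau=\hat{\mathbf{x}}_y$, assembled from $\sigma,\sigma_x,\sigma_{xx}$ and powers of $\hat{\mathbf{x}}$; (ii) a banded, $\mathfrak{s}$-sparse momentum factor $\hat{\mathbf{p}}$ or $\hat{\mathbf{p}}^2$ of Eqs.~(\ref{eq:momentum1})--(\ref{eq:momentum2}); and (iii) the Schr\"odingerisation factor $\hat{\mathbf{x}}_w$ and the clock term $\hat{\mathbf{p}}_y$. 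The diagonal polynomial-in-$\hat{\mathbf{x}}$ factors are block-encoded by evaluating the coefficient polynomial on the grid and applying a controlled rotation, at cost $O(D_s\, n\log n)$ via Horner-style arithmetic (the same piecewise-polynomial circuitry underlying Theorem~\ref{theorem:piecewise_poly}); the clock-register polynomial in $\hat{\mathbf{x}}_y$ costs $O(D_t\, n_y\log n_y)$; the banded momentum factors cost $O(\mathfrak{s}\,n)$; and multiplication by $\hat{\mathbf{x}}_w$ costs $O(n_w\log n_w)$. I would assemble the anticommutators $\tfrac12\{\hat C,\hat{\mathbf{p}}^2\}$, $\tfrac12\{\hat B,\hat{\mathbf{p}}\}$, the commutators, and the outer sum via products and a linear combination of unitaries, arranging the construction so that the subnormalization stays $\alpha=O(\mathfrak{s}\lVert H\rVert_{\max})$ with $\lVert H\rVert_{\max}\sim\sigma_{\max}^2\max(|a|,|b|)^2 2^{2n}$, and the ancilla overhead only logarithmic, $c=O(\log n+\log n_y+\log n_w)$.

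Second, I would invoke optimal sparse Hamiltonian simulation (Theorem~\ref{theorem: Random walk}, in its block-encoding formulation) applied to the $\alpha$-block-encoding of $H$. With $\alpha=O(\mathfrak{s}\lVert H\rVert_{\max})$, implementing $e^{-iHT}$ to error $\epsilon_\text{evol}$ requires $O\bigl(\alpha T+\log(1/\epsilon_\text{evol})/\log\log(1/\epsilon_\text{evol})\bigr)$ queries to the block-encoding, which I would sharpen to the stated additive form $\log(1/\epsilon_\text{evol})/\log\bigl(e+\log(1/\epsilon_\text{evol})/(\mathfrak{s}\lVert H\rVert_{\max}T)\bigr)$; the regime hypothesis $\mathfrak{s}\lVert H\rVert_{\max}T=O(\cdots)$ is exactly the window in which this additive term is the one delivered by the algorithm. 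Multiplying this query count by the per-query gate cost assembled in the first stage yields the product bound in the theorem, and propagating the truncation errors of the polynomial evaluations into the block-encoding error gives the overall $\epsilon_\text{evol}$ guarantee.

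The hard part will be the first stage: realizing the diagonal coefficient operators, which are genuinely functions of the \emph{two} commuting operators $\hat{\mathbf{x}}$ and $\hat{\mathbf{x}}_y$, without a multivariate spectral transform, which does not exist with current tools \cite{Rossi2022multivariable}. The key is that the low-rank separable representation of $\sigma$ (see the Remark) reduces each bivariate coefficient to a sum of $O(R)$ products of a one-dimensional polynomial in $\hat{\mathbf{x}}$ and a one-dimensional polynomial in $\hat{\mathbf{x}}_y$, each built by the 1D circuitry above and combined by LCU; this is what keeps the term count, and hence the gate cost, at the stated $D_s n\log n + D_t n_y\log n_y$ (up to the multiplicative $R$) rather than $O(D_sD_t)$. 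The remaining technical care is to ensure that these diagonal block-encodings compose with the sparse momentum block-encodings inside the anticommutators so that the assembled operator reproduces exactly the coefficient functions $A,B,C$ of Eq.~(\ref{eq: Hamiltonian_forward_equation_s_dependence_onlys}) while the subnormalization remains tight at $\mathfrak{s}\lVert H\rVert_{\max}$; this is precisely the content of the detailed construction of Refs.~\cite{guseynov2024Hamsim,guseynov2025quantum}, which I would adapt to the present $H$.
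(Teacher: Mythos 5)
Your proposal is correct and follows essentially the same route as the paper, which itself does not prove this theorem internally but imports it from Theorems 8 \& 9 of \cite{guseynov2025quantum}: an explicit block-encoding of each factor of $H$ (diagonal polynomial coefficients via quantum arithmetic, banded momentum stencils, the $\hat{\mathbf{x}}_w$ and clock terms) assembled by LCU with subnormalization $\mathcal{O}(\mathfrak{s}\lVert H\rVert_{\max})$, then composed with optimal qubitization-based simulation so that the total cost factorizes as query count times per-query gate cost. Your treatment of the bivariate $\sigma$ through the low-rank separable representation and LCU likewise matches the paper's own Remark on how Eq.~(\ref{eq: sigma_time_spatial_dependent}) is actually implemented.
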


This result shows that quantum algorithms for option-pricing
PDEs achieve a genuine complexity advantage, with quantum gate complexity scaling as $\mathcal{O}(N^2 \log N \log \log N)$ versus the classical $\mathcal{O}(N^3)$ cost (see Appendix~\ref{appendix: comparison with classical}), with $N=2^n$ the number of grid points. These are both simplified scaling expressions.
The quantum advantage does not depend on black-box oracles, but arises from the explicit construction of block-encodings. These results illustrate a polynomial speedup and encourage further study of resource-efficient quantum circuits for financial PDEs. Additionally, a refinement to study effective regimes of energy could lead to a better speedup, see Remark~\ref{remark:p scaling}.

The multidimensional cost scaling of the option–pricing problem under our framework remains polynomial in the dimension. In terms of the Hamiltonian–simulation methodology, the most restrictive factor is the operator norm $\lVert H\rVert_{\max}$, which is largely controlled by the maximal order $\varkappa$ of the momentum (derivative) operator in the Kolmogorov/Black–Scholes setting \eqref{eq:forward_equation_general_form}; here $\varkappa=2$. This persists in multi–asset models \cite{kohn2011-section1}. For instance, the two–asset forward Kolmogorov PDE \cite{Antoine_Conze} for $p=p\!\left(S_1,S_2,t\right)$ takes the form
\begin{equation}
\begin{aligned}
\partial_t p
&+ \sum_{i=1}^{2}\left(
\partial_{S_i}\left(\mu_i S_i p\right)
- \frac{1}{2}\partial_{S_iS_i}\left((\sigma_i S_i)^2 p\right)
\right) \\
&- \rho \partial_{S_1S_2}\left(\sigma_1 S_1 \sigma_2 S_2 p\right)
= 0.
\end{aligned}
\label{eq:twoD_forward_FP_C}
\end{equation}
We discretize the state space with a total of $d n$ qubits using $n$ qubits per asset, corresponding to a per–asset grid of $2^n$ points and an overall Hilbert–space dimension $2^{d n}$.

\begin{theorem}[Multidimensional scaling and simulation cost]\label{thm:multidim_scaling}
Consider a $d$–asset local–volatility model in the forward (Kolmogorov) formulation, discretized with $n$ qubits per asset (total $d n$ qubits). Then the gate complexity of one time–evolution step via Hamiltonian simulation scales polynomially in $d$ and, in particular, as $\mathcal{O}\!\left(d^{5}\right)$ in the $d$–dependence. 
\end{theorem}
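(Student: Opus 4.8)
The plan is to generalize the single–asset pipeline (Schrödingerisation, clock dimension, and central–difference discretization) verbatim to $d$ assets and then track only the $d$–dependence of the two quantities that control the gate count in Theorem~\ref{theorem:QuantumHamiltonianSimulation}: the sparsity $\mathfrak{s}$ and the max–norm $\|H\|_{\max}$. First I would write the $d$–asset forward Kolmogorov PDE, extending Eq.~\eqref{eq:twoD_forward_FP_C}, as a sum of $d$ single–asset drift–plus–diffusion operators $\partial_{S_i}(\mu_i S_i p)-\tfrac12\partial_{S_iS_i}((\sigma_i S_i)^2 p)$ together with $\binom{d}{2}$ pairwise correlation operators $\rho_{ij}\,\partial_{S_iS_j}(\sigma_i S_i\sigma_j S_j p)$. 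Discretizing each dimension on $2^n$ points with the Hermitian central–difference operators of Table~\ref{table:central-difference-schemes}, and using the low–rank separable representation of $\sigma$ introduced earlier so that every coefficient factor is realized with the same one–dimensional oracles, yields a time–independent Hamiltonian on $dn+n_w+n_y$ qubits with exactly the block structure required by Theorem~\ref{theorem:QuantumHamiltonianSimulation}.

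Next I would bound the sparsity. Each single–asset operator uses a three–point stencil and therefore contributes only $O(1)$ off–diagonal connections per dimension, i.e.\ $O(d)$ in total; each pairwise operator expands, by the product rule, into a nine–point tensor–product stencil coupling the neighbours $(S_i\pm1,S_j\pm1)$, so the $\binom{d}{2}$ correlation terms contribute $O(d^2)$ connections per row. Hence the per–row count of nonzeros is $\mathfrak{s}=O(d^2)$, dominated by the cross–correlation stencils. Then I would bound $\|H\|_{\max}$. The largest matrix elements sit on the diagonal, which accumulates the $-2/\Delta_x^2$ contributions of the $d$ single–asset second–derivative terms, each of magnitude $\sim\sigma_{\max}^2\max(|a|,|b|)^2\,2^{2n}$; the mixed–derivative terms touch the diagonal only through their grid–size–independent multiplication piece and the off–diagonals only with magnitude $\sim 2^{n}$, both subdominant to the $2^{2n}$ diffusion scale in the fine–grid regime $d=O(2^{2n})$. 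Consequently the $d$–dependence is $\|H\|_{\max}=O(d)\cdot\sigma_{\max}^2\max(|a|,|b|)^2\,2^{2n}$, so the simulation parameter is $\gamma=\mathfrak{s}\,\|H\|_{\max}\,T=O(d^3)$ in $d$, which fixes the query (time–step) complexity of the optimal simulation of Theorems~\ref{theorem: Random walk}--\ref{theorem:QuantumHamiltonianSimulation}.

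Finally I would count the gates per block–encoding query. The block encoding is an LCU over the $O(d^2)$ Hamiltonian terms, and the decisive observation is locality: every term acts nontrivially on at most two of the $d$ asset registers, so its individual shift/adder and coefficient circuits cost only $O(D_s\,n\log n)$, independent of $d$. Thus the whole $d$–dependence of the per–query gate cost is the number of branches, $O(d^2)$. Multiplying the query complexity $O(d^3)$ by the per–query cost $O(d^2)$ gives the claimed $\mathcal{O}(d^5)$ scaling, with all remaining factors ($n$, $\log n$, $\epsilon_\text{evol}$, and the separable rank $R$) contributing only $d$–independent overheads.

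I expect the main obstacle to be the norm bookkeeping of the third step—specifically, verifying that the $O(d^2)$ multiplication pieces generated by the product rule on the correlation terms stay subdominant to the $O(d)$ diffusion contributions, so that $\|H\|_{\max}$ is genuinely $O(d)$ and not $O(d^2)$—and confirming that the locality argument of the last step survives the Schrödingerisation and clock embeddings, whose extra operators act only on the auxiliary $w$– and $y$–registers and hence do not couple distinct asset registers. Once these two points are pinned down, the accounting $\gamma\sim d^3$ times $O(d^2)$ branches assembles directly into $\mathcal{O}(d^5)$.
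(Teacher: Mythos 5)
Your proof is correct and reaches the stated $\mathcal{O}(d^5)$ bound, but your bookkeeping is genuinely different from the paper's. The paper's three-step accounting is: $\lVert H\rVert_{\max}=\Theta(d^2)$ (obtained by simply counting the $d+\binom{d}{2}$ drift, diffusion and cross-derivative terms), sparsity $\Theta(d^2\mathfrak{s}^2)$, and a block-encoding query cost that is \emph{linear} in $d$; the product $d^2\cdot d^2\cdot d$ gives $d^5$. You instead prove a \emph{tighter} norm bound, $\lVert H\rVert_{\max}=O(d)\cdot\sigma_{\max}^2\max(\abs{a},\abs{b})^2\,2^{2n}$, by observing that only the $d$ pure second-derivative terms accumulate on the same (diagonal) matrix entries, while each cross term's large entries land on distinct off-diagonal positions and its diagonal multiplication piece lacks the $2^{2n}$ factor; this is more careful than the paper's term count, which tacitly treats all $\Theta(d^2)$ terms as if they stacked on a single entry. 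You then pay on the other side: your per-query cost is $O(d^2)$ (one LCU branch per Hamiltonian term), versus the paper's $O(d)$ sparse-oracle figure. The two discrepancies cancel exactly, $d\cdot d^2\cdot d^2=d^2\cdot d^2\cdot d=d^5$. Two remarks. First, your claim that the mixed-derivative off-diagonals have magnitude $\sim 2^{n}$ is a slip: a tensor product of two central first-difference factors gives entries $\sim 1/(4\Delta x^2)\sim 2^{2n}$; however, since each such entry receives a contribution from only one pair $(i,j)$, they still do not accumulate in $d$ and your $O(d)$ norm conclusion survives. Second, combining your tight norm bound with the paper's linear-in-$d$ block-encoding cost would yield $O(d^4)$, so the theorem's $d^5$ appears to be a non-tight upper bound that both routes certify; this mismatch between the two accountings is worth flagging explicitly rather than letting the cancellation pass silently.
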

\begin{proof}
This arises from:
\begin{enumerate}
    \item \textbf{Operator norm.} The maximum derivative order remains $\varkappa=2$ independent of $d$, while the number of drift and diffusion terms grows like $d^{2}$ (including $d$ diagonal and $\binom{d}{2}$ cross–derivative terms), so $\lVert H\rVert_{\max}$ scales as $\Theta\!\left(d^{2}\right)$.
    \item \textbf{Sparsity.} If a one–dimensional stencil has sparsity $\mathfrak{s}$, then tensor–product and cross terms yield an overall sparsity $\Theta\!\left(d^{2}\,\mathfrak{s}^{2}\right)$.
    \item \textbf{Block encoding price.} The cost of a single block–encoding query is linear in $d$ (see Theorem~\ref{theorem:QuantumHamiltonianSimulation} and \cite{guseynov2024Hamsim}).
\end{enumerate}
Combining these contributions gives the stated $\mathcal{O}\!\left(d^{5}\right)$ evolution cost in the dimension $d$.
Moreover, the complexity of preparing the initial condition scales linearly in $d$:
for $p\!\left(S_1,\dots,S_d,0\right)=\prod_{j=1}^{d}\delta\!\left(S_j-S_{j,0}\right)$ one has $\ket{p(0)}=\bigotimes_{j=1}^{d}\ket{S_{j,0}}$, so preparing each $\ket{S_{j,0}}$ independently on its $n$–qubit register yields overall $\mathcal{O}\!\left(dn\right)$ cost (cf. Section~\ref{subsec: initial state prep}).
\end{proof}

\subsubsection{Computing the payoff}\label{subsec:computing the payoff}
\label{sec_MCI} 
The final task for solving the forward equation, Eq. (\ref{eq:forward_equation_general_form}), is computing the option price, Eq. (\ref{eq:payoff_definition}). In the forward model, the quantity of interest is encoded as an expected value, while in the backward formulation, it is codified in an specific amplitude of the wave function. This distinction is crucial when the discretization is refined: quantities encoded as expected values converge to their \textit{continuous limit}, while those encoded in amplitudes decay exponentially towards zero, resulting into a prohibitive number of samples to estimate its value. 

Here, we address the most common payoff corresponding to put and call options, see Table \ref{tab:payoff}. For the sake of simplicity, hereafter we consider only the put option payoff. Thus, the option price reads
\begin{equation}
    V(0)=e^{-rT}\sum_{i=0}^{\kappa}p(x_i,T; S_0)(K-x_i)\Delta x,
    \label{eq:CBQCpayoff_integral}
\end{equation}
where $\kappa =\text{max}_j \{x_j \leq K\}$. We define the state encoding the discretized normalized payoff as 
\begin{equation}
    \ket{C_0}=\frac{1}{N_f}\sum_{i=0}^{\kappa}(K-x_i)\ket{i},
    \label{eq:auxiliary wave function swap test}
\end{equation}
with $N_f$ the normalization factor. We show how to construct $\ket{C_0}$ with probability $\mathcal{O}(1)$ in  Appendix \ref{appendix: auxiliary wave function}. The total complexity of $\ket{C_0}$ construction is $6n^2-4n+6$ CNOTs and $n-1$ ancillas.  \\

Since the quantity of interest is an expected value between the price distribution and the payoff, the option price can be calculated efficiently using the swap test. Note that in our case, we encode the probability distribution into the amplitudes of a quantum state, differing from previous approaches where the amplitudes encode the square root of the price \cite{woerner_option_pricing, Woerner_threshold, Neufeld_CPWA, Montanaro_multilevel_MC_local_vol} and therefore can benefit from the quadratic speedup of quantum Monte-Carlo integration
\cite{Montanaro_MC_original}.

\begin{theorem}[Swap test (Proposition 6, \cite{huang2019near})]\label{theorem:swap test}
    Given multiple copies of $n$-qubit quantum states $\ket{p(T)}$ and $\ket{C_0}$, there is a quantum algorithm that determines the overlap $|\braket{C_0}{p(T)}|^2=\left(\sum_{i=0}^{\kappa}p(x_i,T; S_0)(K-x_i)\right)^2/(N_f N_p)^2$ to additive accuracy $\epsilon_{st}$ with failure probability at most $\delta$ using $\mathcal{O}(\frac{1}{\epsilon_{st}^2}\log(\frac{1}{\delta}))$ copies and the quantum circuit with $7n$ CNOT operations.\\
\end{theorem}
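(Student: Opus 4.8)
The plan is to separate the claim into two essentially independent parts: the purely algebraic identity for the overlap, which follows directly from the amplitude encodings, and the operational guarantee of the swap-test subroutine, which is the content of Proposition~6 of Ref.~\cite{huang2019near} specialized to the two states at hand. First I would establish the overlap formula by direct computation. Substituting the encodings $\ket{p(T)} = \frac{1}{N_p}\sum_{i=0}^{2^n-1} p(x_i,T;S_0)\ket{i}$ and $\ket{C_0}=\frac{1}{N_f}\sum_{i=0}^{\kappa}(K-x_i)\ket{i}$ from Eq.~(\ref{eq:auxiliary wave function swap test}), and using that both sets of amplitudes are real, orthonormality of the computational basis collapses the inner product to a sum over the support $i\le\kappa$ of $\ket{C_0}$:
\begin{equation}
\braket{C_0}{p(T)} = \frac{1}{N_f N_p}\sum_{i=0}^{\kappa}(K-x_i)\,p(x_i,T;S_0).
\end{equation}
Squaring reproduces the stated expression, which up to the normalizations $N_f,N_p$ is exactly the discretized undiscounted put value of Eq.~(\ref{eq:CBQCpayoff_integral}). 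This step is routine and presents no difficulty.

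Next I would invoke the swap-test circuit and analyze its output statistics. The circuit appends one ancilla in $\ket{0}$, applies a Hadamard, a controlled-SWAP between the two $n$-qubit registers, a second Hadamard, and then measures the ancilla; a short calculation gives the outcome-$0$ probability $P_0=\tfrac12\bigl(1+|\braket{C_0}{p(T)}|^2\bigr)$, so the overlap is recovered as $2P_0-1$. To obtain the overlap to additive accuracy $\epsilon_{st}$ it suffices to estimate the Bernoulli parameter $P_0$ to accuracy $\epsilon_{st}/2$; Hoeffding's inequality bounds the failure probability of $N$ independent runs by $2e^{-N\epsilon_{st}^2/2}$, so demanding this be at most $\delta$ forces $N=\mathcal{O}\bigl(\epsilon_{st}^{-2}\log(1/\delta)\bigr)$ copies, matching the stated count. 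For the gate budget I would decompose the controlled-SWAP into $n$ ancilla-controlled Fredkin gates acting pairwise on matching register indices, each contributing a constant number of CNOTs in the chosen decomposition; summing over the $n$ pairs gives the claimed $7n$ CNOT operations.

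The main obstacle is not correctness---each ingredient is standard and the operational guarantee is precisely the cited proposition---but rather pinning down the constant in the CNOT count, since the value $7n$ depends on the particular Fredkin decomposition used and on whether the ancilla-controlled structure is absorbed efficiently across the $n$ pairs. The reassuring structural point I would emphasize is that $\epsilon_{st}$ is an \emph{additive} error on a quantity encoded as an expected value; unlike the amplitude-encoded backward quantities discussed in Section~\ref{sec_MCI}, it does not decay as the grid is refined, so the $\mathcal{O}(\epsilon_{st}^{-2})$ sample cost remains benign under discretization refinement and the overall retrieval stays efficient.
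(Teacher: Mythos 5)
Your proposal is correct, but it is worth noting how it relates to what the paper actually does: the paper does not prove this statement at all---it imports it wholesale as Proposition~6 of Ref.~\cite{huang2019near}, and the only part it works out itself is the algebraic specialization of the overlap to the pricing states, which appears in Appendix~\ref{sec:appendix_swap} (the identity $\sum_i f(x_i)p(x_i) = N_f N_p\,|\langle C_0 \mid p(T)\rangle|$ behind Eq.~\eqref{eq:V_integral}). Your first step reproduces exactly that specialization, and it is the only ingredient the paper itself supplies. Your second and third steps---the ancilla-outcome probability $P_0=\tfrac12\bigl(1+|\braket{C_0}{p(T)}|^2\bigr)$, the Hoeffding argument giving $\mathcal{O}\bigl(\epsilon_{st}^{-2}\log(1/\delta)\bigr)$ copies, and the Fredkin-based CNOT count---re-derive the content of the cited proposition rather than matching any argument in the paper. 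That trade-off is reasonable: the citation keeps the paper lean and pins the constant $7n$ to a specific published construction, while your version is self-contained and makes transparent where the $\epsilon_{st}^{-2}$ and $\log(1/\delta)$ factors originate. Your honesty about the constant is warranted, and it is the one place your reconstruction is genuinely underdetermined: the naive decomposition (controlled-SWAP as two CNOTs conjugating a Toffoli, with the Toffoli costing six CNOTs) gives $8n$, not $7n$, so reproducing the stated count requires the particular optimized controlled-SWAP circuit used in Ref.~\cite{huang2019near}; without access to that construction the figure $7n$ cannot be verified from first principles, only bounded as $\Theta(n)$. Your closing structural remark---that the additive error sits on an expected-value--encoded quantity and hence does not degrade under grid refinement---is exactly the point the paper makes at the start of Section~\ref{sec_MCI} and is the right reason the $\mathcal{O}(\epsilon_{st}^{-2})$ sample cost is benign here.
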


\noindent
Additionaly, we also require and additional swap test to determine $|\braket{p(T)}{+}^{\otimes n}|^2$. By using the derivations in Appendix~\ref{sec:appendix_swap}, the  
final closed-form option prices for call and put payoffs are summarized in Table \ref{tab:TableIV}.

\begin{remark}[Swap-test retrieval for multi-asset pricing]
In the swap-test setting (Theorem~\ref{theorem:swap test}), the
risk-neutral price of a $d$-asset European claim can be written as
\begin{equation}
V(0)=e^{-rT}\!\int_{\mathbb{R}_+^d} C_{d,0}(\mathbf s)\,p(\mathbf s,T)\,d\mathbf s,
\qquad \mathbf s=(s_1,\ldots,s_d),
\label{eq:multi_asset_integral_price_C}
\end{equation}
where $C_{d,0}(\mathbf s)$ is the multi-asset payoff. After discretization,
one prepares two amplitude-encoded states
\begin{equation}
\ket{C_{d,0}}\propto\sum_{\mathbf j} C_{d,0}(\mathbf s_{\mathbf j})\,\ket{\mathbf j},
\qquad
\ket{p_d}\propto\sum_{\mathbf j} p(\mathbf s_{\mathbf j},T)\,\ket{\mathbf j},
\end{equation}
and uses the swap test to estimate $|\langle C_{d,0}|p_d\rangle|^2$,
which yields the discretized version of
\eqref{eq:multi_asset_integral_price_C} up to known scaling factors.
The main cost here lies in state preparation; the swap test itself is
comparatively lightweight.

Representative multi-asset payoffs include \cite{hull2016options,Johnson1987MaxMin}:
\begin{itemize}
\item \textbf{Arithmetic basket (call/put), weights $w_i$, strike $K$:}\\
$C_{d,0}(\mathbf s)=\big(\sum_{i=1}^d w_i s_i-K\big)^{+}$,\quad
$C_{d,0}(\mathbf s)=\big(K-\sum_{i=1}^d w_i s_i\big)^{+}$.
\item \textbf{Spread / exchange (Margrabe-type):}\\ \quad
$C_{d,0}(\mathbf s)=\big(s_1-\kappa s_2-K\big)^{+}$.
\item \textbf{Geometric basket:}\\ \quad
$C_{d,0}(\mathbf s)=\big(\prod_{i=1}^d s_i^{\,w_i}-K\big)^{+}$.
\item \textbf{Digital (cash-or-nothing):}\\ \quad
$C_{d,0}(\mathbf s)=\mathbf 1\{\sum_{i=1}^d w_i s_i\ge K\}$,\quad
$C_{d,0}(\mathbf s)=\mathbf 1\{\max_i s_i\ge K\}$.
\end{itemize}

We note that from the view of the representative payoffs the corresponding quantum states are either factorized or well-approximated by a short
linear combination of factorized terms:
\begin{equation}
\ket{C_{d,0}}=\sum_{i=0}^{\mathcal Q}\alpha_i\,
\ket{C^{(i)}_{1,0}}\otimes\ket{C^{(i)}_{2,0}}\otimes\cdots\otimes\ket{C^{(i)}_{d,0}},
\qquad \mathcal Q\ \text{small},
\end{equation}
and each factor $\ket{C^{(i)}_{k,0}}$ is efficiently preparable by
Theorem~\ref{theorem:piecewise_poly}. Consequently,
using the theorem to prepare each
$\ket{C^{(i)}_{j,0}}$ and then combining via LCU, the total preparation
scales as $\mathcal Q d$, hence linear in $d$. 
\end{remark}

\begin{table}[t!]
\centering
\label{tab:payoff_formulas_final}
\renewcommand{\arraystretch}{1.2} 
\begin{tabular}{c||c}
\textbf{Call option} &
{\large$\displaystyle
e^{-rT}\,\frac{(b-K)^{3/2}}{\sqrt{3(b-a)}}\sqrt{\frac{2^n-1}{2^n}}
\frac{|\langle C_0 \mid p(T)\rangle|}{|\langle +^{\otimes n}\mid p(T)\rangle|}
$} \\
\hline\hline
\textbf{Put option} &
{\large$\displaystyle
e^{-rT}\,\frac{(K-a)^{3/2}}{\sqrt{3(b-a)}}\sqrt{\frac{2^n-1}{2^n}}
\frac{|\langle P_0 \mid p(T)\rangle|}{|\langle +^{\otimes n}\mid p(T)\rangle|}
$} \\
\end{tabular}
\caption{Final call and put option prices obtained from swap tests.  
Here \(a\) and \(b\) respectively denote the lower and upper integration limits of  
the underlying asset domain. \(\ket{C_0}\) encodes the corresponding payoff function (put/call).}\label{tab:TableIV}
\end{table}

\noindent
The combined effect of Monte-Carlo (shot) noise and discretization error on the option prices satisfies the scaling
\begin{equation}
\frac{\epsilon_{V_0}}{V(0)} \sim 
\mathcal{O}\left(1/\sqrt{N_{\text{shots}}}\right) 
+ \mathcal{O}(\Delta x),
\end{equation}
where \(\Delta x=(b-a)/(2^n-1)\) is the grid spacing and  
\(N_{\text{shots}}\) denotes the total number of independent circuit  
executions (or “shots”) used to estimate each overlap from classical  
bitstring outcomes. Full derivations and detailed analysis are provided in  
Appendix~\ref{sec:appendix_swap}.

\section{Conclusion}

In this work, we have introduced a quantum algorithm to solve option-pricing of vanilla options. In particular, we solve the Kolmogorov forward equation under the local-volatility model by employing the Schrödingerisation technique to map the problem of solving the price dynamics into a Hamiltonian simulation task. In this sense, our framework takes classical input (the current stock price) and provides classical data as output (the corresponding option price), thereby establishing a viable pathway toward exploiting quantum computational resources for financial modeling.

Firstly, by contrasting forward and backward formulations, we have elucidated the inherent duality in option-pricing and shown that the forward Kolmogorov approach provides significant numerical benefits, especially in the context of quantum implementations. In this sense, the main advantage of the forward scheme is the efficient quantum retrieval of the option-pricing. Our subsequent analysis regarding the computational implementation highlights that, although the quantum advantage for pricing a single option may be limited to polynomial improvements, the true potential of the proposed methodology lies in its scalability. In particular, the exponential efficiency gain when addressing high-dimensional problems, such as baskets of options, where the complexity scales linearly in terms of the number of dimensions $d$. This underscores the capacity of quantum algorithms to mitigate the curse of dimensionality that constrains classical  methodologies for option-pricing.

Future research should address extending the present framework to incorporate stochastic volatility dynamics, more general payoffs, and risk management applications constitutes a promising direction. Additionally, the exponentially better scaling in dimension motivates the study of more complex financial derivatives like American and Asian options. 

Ultimately, the results reported herein reinforce the view that quantum computing can provide a transformative tool in computational finance, paving the way toward practical quantum advantage in option-pricing and beyond.\\

\section*{Acknowledgments}
NG acknowledges funding from NSFC grant W2442002. MS and JGC acknowledge
support from HORIZON-CL4-2022-QUANTUM01-SGA project 101113946 OpenSuperQ-Plus100 of
the EU Flagship on Quantum Technologies, the Spanish Ram\'on y Cajal Grant RYC-2020-030503-I,
and the “Generaci\'on de Conocimiento” project Grant No. PID2021-125823NA-I00 funded by MICIU/AEI/10.13039/501100011033, by “ERDF Invest in your Future” and by FEDER EU. We also
acknowledge support from the Basque Government through Grants No. IT1470-22, the Elkartek
project KUBIBIT - kuantikaren berrikuntzarako ibilbide teknologikoak (ELKARTEK25/79), and from the IKUR Strategy under the collaboration agreement between Ikerbasque Foundation and BCAM on behalf of the Department of Education of the Basque
Government. This work has also been partially supported by the Ministry for Digital Transformation
and the Civil Service of the Spanish Government through the QUANTUM ENIA project call – Quantum Spain project, and by the European Union through the Recovery, Transformation and Resilience
Plan – NextGenerationEU within the framework of the Digital Spain 2026 Agenda. NL acknowledges funding from the Science and Technology Commission of Shanghai Municipality (STCSM) grant no. 24LZ1401200 (21JC1402900), NSFC grants No.12471411 and No.12341104, the Shanghai Jiao Tong University 2030 Initiative, the Shanghai Science and Technology Innovation Action Plan (24LZ1401200) and the Fundamental Research Funds for the Central Universities.  

\bibliography{Article.bib}
\newpage
\appendix
\onecolumngrid
\section{Auxiliary wave function $\ket{C_0}$ building}\label{appendix: auxiliary wave function}

We start with the encoding of the wave function
\begin{equation}
    \ket{\phi}=\frac{1}{N_\phi}\sum_{i=0}^{2^n-1}(ai+b)\ket{i},
\end{equation}
and then modify the circuit to achieve the desired quantum state, $\ket{C_0}$. Authors in Ref. \cite{gonzalez2024efficient} provide an efficient way on how to build the quantum circuit than encodes $\ket{\phi}$ using the Walsh-Hadamard transform or alternatively its MPS representation.\\




Our next step is to expand the circuit in figure to prepare $\ket{C_0}=\frac{1}{N_f}\sum_{i=\kappa}^{2^n-1}(x_i-K)\ket{i}$. For this purpose we introduce one flag ancillary qubit which keeps the information whether $i\geq\kappa$. The initialization of this state is depicted in Fig.~\ref{fig:K_greater_or_not}. The complexity of this procedure is of $12n-4$ C-NOTs and $n-1$ ancillary qubits.
\begin{eqnarray}
    \ket{\phi}\rightarrow \ket{\zeta}=\frac{1}{N_\phi}\left[\sum_{i=0}^{\kappa-1}(ai+b)\ket{i}\ket{0}+\sum_{i=\kappa}^{2^n-1}(ai+b)\ket{i}\ket{1}\right].
\end{eqnarray}

\begin{figure}[h!]
    \includegraphics[width=1\textwidth]{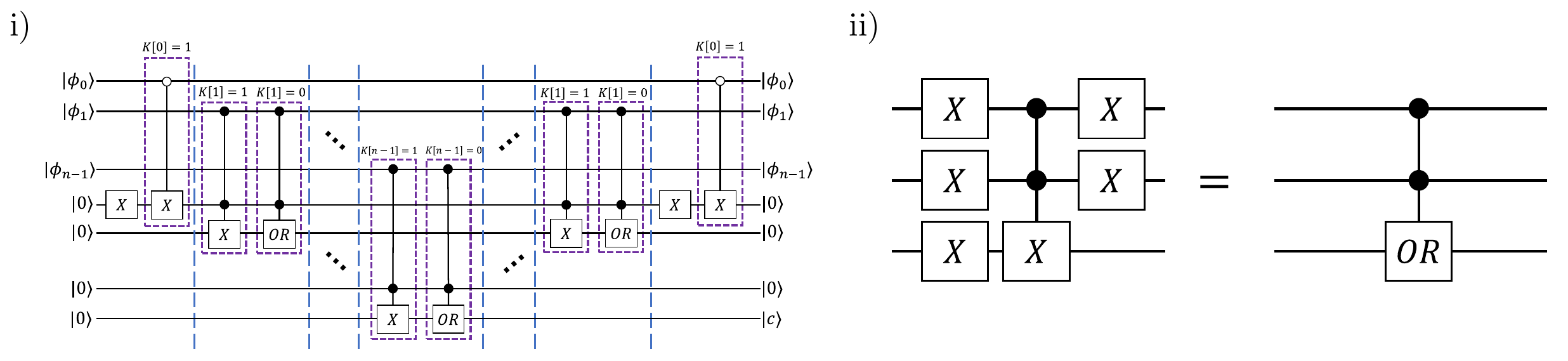}
    \caption{i) The general scheme of building the state $\ket{\zeta}$ from $\ket{\phi}$. The whole circuit can be understood as classical comparator as it compare the state (computational basis) in the upper register with $K$ setting the last qubit in the state $\ket{c}=\ket{1}$ if $\phi\geq K$. The quantum circuit uses $n-1$ ancillas setting them back to zero state $\ket{0}$. A classical array $K[i]$ holds binary representation of $K$. The value of $K[i]$ determines which gate is applied. ii) The representation of $OR$ gate.}
    \label{fig:K_greater_or_not}
\end{figure}

Finally, measuring the last qubit of $\ket{\zeta}$ in $\ket{1}$ give us the desired $\ket{C_0}$ with probability $\mathcal{O}(1)$. This estimation yields from $\kappa \approx 2^{n-1}$ which correspond to the assumption that we introduced a finite-partitioning grid in the region of interest.

\section{Comparison with classical numerical methods}\label{appendix: comparison with classical}
After semi-discretizing Eq.~\eqref{eq:forward_equation_general_form} in space, the PDE is reduced to the linear ODE system
\begin{equation}
  \frac{d u}{d t} = A(t)\,u,\qquad u(0)=u_0,
\end{equation}
where $A(t)\in\mathbb{R}^{N\times N}$, $N=2^{n}$ is the number of spatial grid points, and $A(t)$ has row sparsity $\mathfrak{s}$ (number of nonzero elements per row), where we omitted the inhomogeneous part for simplicity. Two standard classical routes are then employed.

\paragraph{(i) Time discretization (finite-difference time stepping).}
Classical finite-difference schemes—forward/backward Euler and Crank–Nicolson~\cite{leveque2007finite}—discretize both space and time and iteratively update the solution via banded linear algebra. For the implicit backward Euler scheme, each time step requires solving a linear system
\begin{equation}
  A'(t_k)\,u(t_{k-1}) \;=\; b(t_k), \qquad \text{e.g.,}\quad A'(t_k)=I-\Delta t\,A(t_k),
\end{equation}
with $b(t_k)$ incorporating the previous iterate and any source/boundary contributions. With spatial spacing $\Delta x$ and time step $\Delta t$, the total error obeys
\begin{equation}
  \epsilon_\text{fd}\sim \mathcal{O}(\Delta t)+\mathcal{O}(\Delta x^{\,g}),
\end{equation}
where $g$ is the formal order of the spatial stencil (e.g., $g=2$ for second-order central differences; see Table~\ref{table:central-difference-schemes} and~\cite{leveque2007finite}). For parabolic problems, accuracy/stability considerations imply $\Delta t\sim (\Delta x)^2$, so simulating total time $T$ requires $T/\Delta t\sim N^2$ steps. Each step costs $\mathcal{O}(\mathfrak{s}N)$ FLOPs using tridiagonal-/narrow-band solvers, yielding the total complexity
\[
  \mathcal{O}(\mathfrak{s}\,T\,N^{3}) ,
\]
with memory footprint $\mathcal{O}(\mathfrak{s}N)$.

\paragraph{(ii) Exponential integration (matrix–exponential action).}
Another efficient classical approach is the exponential–integrator (matrix–exponential action) method~\cite{doi:10.1137/100788860}. The exact evolution is the time–ordered exponential
\begin{equation}
  u(0) \;=\; \mathcal{T}\exp\!\Big(\int_{T}^{0} A(t)\,dt\Big)\,u(T).
\end{equation}
The idea of the method is to approximate the time–ordered exponential by a product of short–time propagators. Partition $[0,T]$ into $N_t$ steps with $t_k=k\Delta t$ and $\Delta t=T/N_t$; for each subinterval,
\begin{equation}
  e^{\int_{t_k}^{t_{k-1}} A(t)\,dt}\;\approx\; I - A(t_k)\,\Delta t \;+\; \mathcal{O}\big(\lVert A\rVert_{\max}^{2}\,\Delta t^{2}\big),
\end{equation}
and composition yields
\begin{equation}
  u(0)\;\approx\;\big(I - A(t_1)\Delta t\big)\cdots\big(I - A(t_{N_t})\Delta t\big)\,u(T)
  \;=\; \big(I - A(\Delta t)\Delta t\big)\cdots\big(I - A(T)\Delta t\big)\,u(T).
\end{equation}
In the present setting (Eq.~\eqref{eq:forward_equation_general_form}),
\begin{equation}
  \lVert A\rVert_{\max}\;\sim\;\sigma_{\max}^{2}\,\max(\abs{a},\abs{b})^{2}\,2^{2n},
\end{equation}
which is directly analogous to the Hamiltonian bound $\lVert H\rVert_{\max}$ appearing in Theorem~\ref{theorem:QuantumHamiltonianSimulation}. Enforcing the usual first–order condition $\lVert A\rVert_{\max}\Delta t=\Theta(1)$ gives $N_t=\Theta(\lVert A\rVert_{\max}T)$ and total FLOP complexity
\begin{equation}
  \mathcal{O}\big(\mathfrak{s}\,N\,N_t\big)\;=\;\mathcal{O}\big(\lVert A\rVert_{\max}\,T\,\mathfrak{s}N\big)
  \;\sim\;\mathcal{O}(\mathfrak{s}T\,N^{3}),\nonumber
\end{equation}
with memory $\mathcal{O}(\mathfrak{s}N)$. Both classical approaches therefore exhibit the same asymptotic resources in this setting: memory $\mathcal{O}(\mathfrak{s}N)$ and time $\mathcal{O}(T N^{3})$ (simplified as $\mathfrak{s}=3$ for PDE~(\ref{eq:forward_equation_general_form})).

By contrast, the quantum algorithm described here achieves a total complexity
that scales as $\mathcal{O}(N^2polylog(N))$, i.e., quadratically in the number of
grid points, rather than cubically as in the classical cases~\cite{leveque2007finite, doi:10.1137/100788860}.
This demonstrates a clear polynomial quantum advantage for large-scale
PDE problems relevant to option-pricing.

In classical computational approaches, adding extra dimensions—such as those needed to 
account for non-unitarity, time-dependence rapidly leads 
to an exponential increase in computational cost, a phenomenon known as the curse of 
dimensionality. This exponential barrier is a fundamental limitation of classical 
algorithms. In contrast, quantum computation fundamentally changes this scaling: in the 
quantum setting, the introduction of additional registers or auxiliary variables does not 
lead to exponential overhead. In our method, the use of the Schrödingerisation procedure 
and the auxiliary (clock) dimension is essential; these ingredients are required for 
simulating non-unitary and time-dependent quantum dynamics. Importantly, as demonstrated 
in Theorem~\ref{theorem:QuantumHamiltonianSimulation}, the contribution of these extra 
dimensions to the overall quantum complexity is only linear, preserving the main scaling 
of the algorithm. This result indicates that quantum algorithms can efficiently handle 
otherwise classically intractable, high-dimensional financial problems, including basket 
options and correlated multi-asset derivatives. Thus, quantum algorithms open the 
possibility of tackling problems that are beyond reach for classical methods.

\section{Computing option price via the swap test}
\label{sec:appendix_swap}

Using results from Theorem~\ref{theorem: Random walk} and Theorem~\ref{theorem:QuantumHamiltonianSimulation},  
we assume we have prepared the probability amplitude state \(|p(T)\rangle\), representing the underlying asset's  
distribution at maturity. Our goal is to compute the option price \(V\) for the call option  
(the put case will be presented later).

We begin by defining the normalized payoff and probability states as
\begin{eqnarray}
|C_0\rangle &=& \frac{1}{N_f}\sum_{i=0}^{2^n-1} f(x_i)\,|i\rangle, 
\qquad 
|p(T)\rangle \;=\; \frac{1}{N_p}\sum_{i=0}^{2^n-1} p(x_i)\,|i\rangle,
\label{eq:states}
\end{eqnarray}
where \(N_f = \sqrt{\sum_i f(x_i)^2}\) and \(N_p = \sqrt{\sum_i p(x_i)^2}\) ensure normalization. 
For the call payoff \(f(x)=\max(0,x-K)\), the discounted option price is first expressed as  
\begin{eqnarray}
V = e^{-rT} \int_a^b f(x) p(x)\, dx 
\;\approx\; e^{-rT} \sum_{i=0}^{2^n-1} f(x_i) p(x_i)\,\Delta x
\;=\; e^{-rT}\, N_f N_p\, |\langle C_0 \mid p(T)\rangle|\,\Delta x,
\label{eq:V_integral}
\end{eqnarray}
where \(e^{-rT}\) is the (constant-rate) risk-free discount factor, and \(N_f\) is analytically computable:
\begin{eqnarray}
N_f &=& \Bigg(\sum_{i=0}^{2^n-1} f(x_i)^2\Bigg)^{1/2}
\;\approx\; \left(\frac{1}{\Delta x}\int_a^b f(x)^2\,dx\right)^{1/2}
\nonumber\\
&=& \frac{1}{\sqrt{3\,\Delta x}}\,
\left\{
\begin{array}{ll}
(b-K)^{\tfrac{3}{2}}, & \text{if } f(x)=(x-K)_+ \quad \text{(call)},\\[6pt]
(K-a)^{\tfrac{3}{2}}, & \text{if } f(x)=(K-x)_+ \quad \text{(put)}.
\end{array}
\right.
\label{eq:Nf_call_put}
\end{eqnarray}
The overlap \(|\langle C_0 \mid p(T) \rangle|^2\) is obtained from the swap test (Theorem~\ref{theorem:swap test}). 
To compute \(N_p\) we use a \emph{second} swap test between \(|p(T)\rangle\) and the uniform state \(|+\rangle^{\otimes n}\).  
Starting from
\begin{eqnarray}
|\langle +^{\otimes n}\mid p(T)\rangle| 
= \frac{\sum_i p(x_i)}{\sqrt{2^n}\,N_p}
= \frac{\sum_i p(x_i)\Delta x}{\sqrt{2^n}\,N_p\,\Delta x}
\;\approx\; \frac{1}{\sqrt{2^n}\,N_p\,\Delta x},
\label{eq:fidelity_start}
\end{eqnarray}
where we used that \(p(x)\) is a probability distribution on \([a,b]\), i.e.
\begin{eqnarray}
\int_{a}^{b} p(x)\,dx \;\approx\; 1 
\quad \Longrightarrow \quad 
\sum_i p(x_i)\,\Delta x \;\approx\; 1.
\end{eqnarray}
Substituting into Eq.~\eqref{eq:fidelity_start}, we obtain
\begin{eqnarray}
N_p = \frac{1}{\Delta x\,\sqrt{2^n}\,|\langle +^{\otimes n}\mid p(T)\rangle|}.
\label{eq:Np_formula}
\end{eqnarray}

Finally, setting \(\Delta x = \frac{b-a}{2^n-1}\),  
we obtain the call option price as
\begin{eqnarray}
V_{\text{call}} = e^{-rT}\;\frac{(b-K)^{3/2}}
{\sqrt{3(b-a)}}\sqrt{\frac{2^n-1}{2^n}}
\frac{|\langle C_0 \mid p(T)\rangle|}
{|\langle +^{\otimes n}\mid p(T)\rangle|}.
\label{eq:V_call}
\end{eqnarray}
For the put option, define \( |P_0\rangle = \frac{1}{N_f^{(\text{put})}}\sum_i (K-x_i)_+\,|i\rangle \) with 
\(N_f^{(\text{put})}\) as in Eq.~\eqref{eq:Nf_call_put}. Then
\begin{eqnarray}
V_{\text{put}} = e^{-rT}\;\frac{(K-a)^{3/2}}
{\sqrt{3(b-a)}}\sqrt{\frac{2^n-1}{2^n}}
\frac{|\langle P_0 \mid p(T)\rangle|}
{|\langle +^{\otimes n}\mid p(T)\rangle|}.
\label{eq:V_put}
\end{eqnarray}

\subsection{Shot noise}
When implementing the swap tests in practice, we  
run a finite number of quantum circuits and the statistics of the ancilla estimate the overlaps 
\(|\langle +^{\otimes n}\mid p(T)\rangle|^2\) and \(|\langle C_0\mid p(T)\rangle|^2\). 
This sampling uncertainty (shot noise) introduces statistical fluctuations depending on the number of samples (shots).  
From Theorem~\ref{theorem:swap test}, together with Eqs.~\eqref{eq:V_call}–\eqref{eq:V_put}, error propagation yields  
\begin{eqnarray}
\frac{\tilde{\epsilon}_{V}}{V} \;\approx\; \frac12 \sqrt{\frac{\epsilon_{F_1}^2}{F_1^2}  
+ \frac{\epsilon_{F_2}^2}{F_2^2}} \;\sim\;\mathcal{O}\!\ \bigl(N_{\text{shots}}^{-1/2}\bigr),
\label{eq:error_V}
\end{eqnarray}
where \(F_1 = |\langle +^{\otimes n}\mid p(T)\rangle|^2\) and \(F_2=|\langle C_0\mid p(T)\rangle|^2\)  
are estimated from independent swap-test samples, each with 
\(\epsilon_{F_i}\sim \mathcal{O}(N_{\text{shots}}^{-1/2})\). 
Thus the relative sampling error scales as \(\mathcal{O}(N_{\text{shots}}^{-1/2})\) for both  
call and put options.

\subsection{Discretization error}
Replacing the integrals in  
Eqs.~\eqref{eq:V_integral} and \eqref{eq:fidelity_start} by discrete sums  
introduces a deterministic bias from the finite grid spacing  
\(\Delta x = (b-a)/(2^n-1)\). For the Riemann-sum formulas both the price integral and the probability  
normalization contribute first-order errors \(O(\Delta x)=O(2^{-n})\). Balancing  
this with the shot-noise fluctuations in Eq.~\eqref{eq:error_V}, the total  
relative error on the option price caused by information retrieval satisfies  
\begin{eqnarray}
\frac{\epsilon_V}{V} \;\sim\; O(2^{-n}) \;+\; O(N_{\text{shots}}^{-1/2}).
\label{eq:total_error}
\end{eqnarray}

\end{document}